\newtheorem{dfn}{Definition}
\newtheorem{theo}[dfn]{Theorem}
\newtheorem{rem}[dfn]{Remark}
\newtheorem{lem}[dfn]{Lemma}
\newtheorem{fact}[dfn]{Fact}
\newtheorem{ex}[dfn]{Example}
\renewcommand{\phi}{\varphi}
\newcommand{\lrw}{\operatorname{lrw}}
\newcommand{\rw}{\operatorname{rw}}
\newcommand{\OO}{\mathcal O}
\newcommand{\CC}{\mathcal C}
\newcommand{\LL}{\mathcal L}
\newcommand{\BB}{\mathcal B}
\newcommand{\vm}{\preccurlyeq_v}
\newcommand{\cutrk}{\operatorname{cutrk}}
\newcommand{\rank}{\operatorname{rank}}
\newcommand{\dist}{\operatorname{dist}}
\newcommand{\bigmid}{\;\big|\;}
\newcommand{\dotcup}{\,\dot\cup\,}
\newcommand{\qed}{\hspace*{\fill}$\square$}
\def\vertexnodes{\tikzstyle{every node}=[fill=black,circle, inner sep=2.2pt]}
\title{Obstructions for linear rankwidth at most $1$}
\author{Isolde Adler\thanks{Institute for Computer Science, Goethe University Frankfurt, Germany},
\def\myfoot{\thefootnote}
Arthur M.\ Farley\thanks{Computer and Information Science Department, University of Oregon,
Eugene, Oregon, USA}, 
Andrzej Proskurowski$^\myfoot$}
\begin{document}

\maketitle
\begin{abstract}
	We provide a characterization of graphs of linear rankwidth at most~$1$ by minimal excluded
	vertex-minors.
\end{abstract}

\section{Introduction}

The definition and study  of various
{\em width parameters} of graphs has influenced research on
structural characterizations and exploring complexity and
algorithmic properties of graph classes with bounded width. 
One of the first such parameters was {\em bandwidth}, discussed for instance in
papers by Monien and Sudborough~\cite{MonienS85}, Chinn et al~\cite{Chinn82},
Assman et al~\cite{Assman81}. The first modern width parameter was 
{\em treewidth} defined by Robertson and Seymour~\cite{RobertsonS86}, 
opening the floodgates for various
graph decomposition schemes that define other width parameters. These
parameters have strong impact on complexity of many discrete optimization
problems. 

{\em Rankwidth} was first defined by Oum and Seymour~\cite{OumS06} with the
goal of efficient approximation of the cliquewidth of a graph. 
Oum showed that the rankwidth cannot increase when taking {\em vertex minors}~\cite{Oum05},
and he further
investigated the problem of {\em obstruction set characterization} of graphs
with bounded rankwidth. He proved that for given rankwidth $k\ge 0$, the
obstructions (defined as minimal excluded vertex-minors) have bounded size. In
the same paper, he showed that a graph has rankwidth at most $1$ if and only if
it is distance-hereditary. It then follows from results in~\cite{BandeltM86} that the 
obstruction set for graphs of rankwidth at most $1$ simply consists of the $5$-cycle $C_5$.
In~\cite{Bouchet94}, Bouchet determined the obstruction set characterizations for circle graphs. 

The main theorem in this paper is a characterization of 
the class of all graphs of linear rankwidth at most $1$ by three excluded
vertex-minors.
\begin{theo}\label{theo:main}
	Any graph $G$ has linear rankwidth at most $1$ if and only if $G$ contains none
	of the three graphs depicted in Figure~\ref{fig:obstructions} as a vertex-minor.
\end{theo}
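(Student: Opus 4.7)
The plan is to prove the two implications separately, with the forward ($\Rightarrow$) direction being a finite check and the backward ($\Leftarrow$) direction carrying the main combinatorial content.

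For the forward direction, I would first record the fact that linear rankwidth is monotone under vertex-minors; this is analogous to Oum's monotonicity result for rankwidth~\cite{Oum05} and follows by the same pivot/local-complementation bookkeeping applied to a linear layout rather than to a rank decomposition tree. Given monotonicity, it suffices to show that each of the three graphs in Figure~\ref{fig:obstructions} has linear rankwidth at least $2$. Each obstruction is small, so this reduces to a finite case check: for every linear ordering of its vertices, exhibit a prefix whose cut matrix over $\mathbb{F}_2$ has rank at least $2$.

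For the backward direction, I would argue by induction on $|V(G)|$. Since $C_5$ has rankwidth $2$, one of the three listed obstructions either is $C_5$ or contains $C_5$ as a vertex-minor; thus the excluded-vertex-minor hypothesis forces $G$ to be $C_5$-vertex-minor-free, and hence distance-hereditary by Oum's characterization. I would then invoke the canonical split decomposition of distance-hereditary graphs (Bouchet/Cunningham), whose prime quotients are cliques and stars, and use the exclusion of the two remaining obstructions to force this decomposition to be ``path-like''. The inductive step then consists of locating a suitable extreme vertex $v$ of $G$ --- for instance, a vertex sitting in a leaf bag of the split decomposition or belonging to a twin pair --- such that (i)~$G-v$ still excludes all three obstructions, and (ii)~the cut $(\{v\}, V(G)\setminus\{v\})$ has rank at most~$1$. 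Appending $v$ to the end of an inductively produced linear layout of $G-v$ of width at most $1$ then yields the required layout of $G$.

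The hard part will be making precise the structural statement that exclusion of the two non-$C_5$ obstructions really forces enough linearity on the split decomposition to supply such an extreme vertex at every stage. Local complementation, which preserves both linear rankwidth and vertex-minor-freeness, should be the central tool here: by pivoting and local-complementing, one can reduce to a normal form in which the candidate vertex $v$ is easy to read off the decomposition. I expect the case analysis to split according to whether the extreme bag is a clique or a star, and according to how many vertices of that bag are ``marked'' (i.e.~attached to the rest of the decomposition); the delicate point in each case is to verify that removing $v$ does not inadvertently create a new occurrence of a forbidden vertex-minor, which is what ensures the induction actually closes.
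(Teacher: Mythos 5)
Your forward direction is sound and matches the paper: linear rankwidth is vertex-minor monotone (Lemma~\ref{lem:vm-closed}), and a finite check shows $\lrw(C_5)=\lrw(N)=\lrw(Q)=2$. Your backward direction starts from the same place as the paper (excluding $C_5$ forces distance-hereditarity, and the split-pair/pendant-vertex structure of distance-hereditary graphs is exactly Fact~\ref{fact:dist-hereditary}), but the inductive step you build on it has a genuine flaw. Condition~(ii), that the cut $(\{v\},V(G)\setminus\{v\})$ has rank at most $1$, is satisfied by \emph{every} vertex of \emph{every} graph, since the corresponding cut matrix is a single row; it imposes nothing. And appending $v$ to the end of a width-$1$ layout of $G\setminus v$ does not in general give a width-$1$ layout of $G$: the vertex $v$ contributes a new column to every earlier prefix cut and can raise its rank to $2$. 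Concretely, $P_4=v_1v_2v_3v_4$ has the width-$1$ layout $v_1,v_2,v_3,v_4$, but appending a fifth vertex adjacent to $v_1$ and $v_4$ (yielding $C_5$) makes the cut after $\{v_1,v_2\}$ have rank $2$. So the induction does not close as stated; what is actually needed is that $v$ be \emph{insertable} next to a twin or its unique neighbour in a layout that has been suitably rearranged, and arranging for that is where all the work lies. Relatedly, your closing worry that deleting $v$ might ``create a new occurrence of a forbidden vertex-minor'' is vacuous --- every vertex-minor of $G\setminus v$ is a vertex-minor of $G$, so condition~(i) is automatic --- which suggests the delicate point has been misidentified.

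The remaining content --- that excluding $N$ and $Q$ forces the decomposition to be ``path-like'' and supplies an extreme vertex at every stage --- is precisely the theorem, and you leave it as ``the hard part'' without an argument. For comparison, the paper does not run a direct induction on distance-hereditary graphs: it works with minimal excluded vertex-minors (obstructions), shows via the block-cut-vertex tree that any obstruction other than $N$ is a $2$-connected graph with at most two pendant ``whiskers'' (Lemmas~\ref{lem:cut-into-three}--\ref{lem:cut-into-two}, Theorem~\ref{theo:classify-obstrucitons}), uses local complementation to turn split pairs into pendant vertices (Lemma~\ref{lem:wiskers}), and then runs a case analysis on the thread label of the second attachment vertex in Ganian's thread-block normal form, showing each case either yields $Q$ (or $N$ or $C_5$) as a vertex-minor or contradicts minimality via the interval-permutation Lemma~\ref{lem:reorder-vertices}. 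Your split-decomposition route could plausibly be made to work, but you would need to replace the append-at-the-end step with an insertion argument tied to the thread ordering (or an equivalent normal form), and supply the structural case analysis that the proposal currently only names.
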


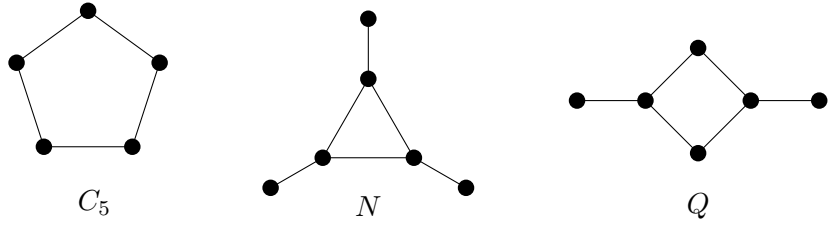
\begin{figure}[h]
\begin{center}	
\begin{tikzpicture}
	\begin{scope}[rotate=18]
	\vertexnodes 
	\coordinate(name) at (-0.4,-1.5);
	\path (0,0) coordinate (origin); 
	\path (0:1cm) coordinate (P0);
	\path (1*72:1cm) coordinate (P1);
	\path (2*72:1cm) coordinate (P2);
	\path (3*72:1cm) coordinate (P3);
	\path (4*72:1cm) coordinate (P4);

	\draw (P0) node{} -- (P1) node{} -- (P2) node{} -- (P3) node{} -- (P4) node{} -- cycle; 
	\draw (name) node[fill=none]{$C_5$};
	\end{scope}
\end{tikzpicture}
\hspace{1cm}
\begin{tikzpicture}
	\vertexnodes 
	\coordinate(name) at (0,-1);
	\coordinate(top) at (90:0.7cm);
	\coordinate(left) at (210:0.7cm);
	\coordinate(right) at (-30:0.7cm);
	\coordinate(extop) at (90:1.5cm);
	\coordinate(exleft) at (210:1.5cm);
	\coordinate(exright) at (-30:1.5cm);
	\draw  (top) node{} -- (left) node{} -- (right) node{} -- cycle;
	\draw (top) -- (extop) node{};
	\draw (left) -- (exleft) node{};
	\draw (right) -- (exright) node{};
	\draw (name) node[fill=none]{$N$};

\end{tikzpicture}
\hspace{1cm}
\begin{tikzpicture}
	\vertexnodes 
\begin{scope}[scale=.7]
	\coordinate(name) at (2,-2);
	\coordinate(top2) at (2,1);
	\coordinate(left2) at (1,0);
	\coordinate(right2) at (3,0);
	\coordinate(bot2) at (2,-1);
	\coordinate(a) at (-0.3,0);
	\coordinate(b) at (4.3,0);
	\draw  (top2) node{} -- (left2) node{} -- (bot2) node{} --(right2) node{}-- cycle;
	\draw (a) node{} -- (left2);
	\draw (b) node{} -- (right2);
	\draw (name) node[fill=none]{$Q$};
\end{scope}

\end{tikzpicture}

\end{center}
\caption{The three obstructions for linear rankwidth at most $1$: The $5$-cycle $C_5$, the net graph $N$, and
the half-cube $Q$.}\label{fig:obstructions}
\end{figure}

It is known that for every fixed integer $k> 0$, the set $\OO_k$ of
graphs that are minimal excluded vertex-minors for linear rankwidth at most $k$ is \emph{finite}~\cite{Oum08}.
Until now, no such set $\OO_k$ was explicitly known. In this paper we determine $\OO_1$.

\section{Preliminaries}
For a set $A$ we denote the power set of $A$ by $2^A$. 
For two sets $A$ and $B$ let $A\Delta B:=(A\setminus B)\cup (B\setminus A)$ denote the
\emph{symmetric difference} of $A$ and $B$.
For an integer $n> 0$
we let $[n]:=\{1,\ldots,n\}$. 
Let $\bar v=v_1,\ldots,v_n$ be an ordered tuple. 
We say that a subtuple $\bar u$ of $\bar v$ is an \emph{interval} of
$\bar v$, if $\bar v$ can be written as
$\bar v=\bar x,\bar u,\bar y$ for some (possibly empty) tuples
$\bar x$ and $\bar y$.

Graphs are finite, simple and undirected.
We denote the set of vertices of $G$ by $V(G)$ and the set of
edges of $G$ by $E(G)$, and every edge $e\in E(G)$ is 
a two-element subset of $V(G)$.
Let $G$ be a graph. 
For a vertex $v$ we let $N_G(v):=\big\{u\in V(G)\bigmid \{u,v\}\in E(G)\big\}$ be the 
\emph{neighborhood} of $v$ in $G$. The \emph{degree} of $v\in V(G)$ is
$\deg_G(v):=\left|N_G(v)\right|$.
For two graph $G$ and $H$, the \emph{intersection} of $G$ and $H$ is the graph 
$G\cap H$ with $V(G\cap H):=V(G)\cap V(H)$ and $E(G\cap H)=E(G)\cap E(H)$.
The \emph{union} of $G$ and $H$ is the graph 
$G\cup H$ with $V(G\cap H):=V(G)\cup V(H)$ and $E(G\cup H)=E(G)\cup E(H)$.
A graph $H$ is a \emph{subgraph} of $G$, if $V(H)\subseteq V(G)$ and $E(H)\subseteq E(G)$.
For a subset $X\subseteq V(G)$, let $G[X]$ be the subgraph of $G$ \emph{induced}
by $X$, i.e.\ $V\big(G[X]\big)=X$ and $E\big(G[X]\big):=\{e\in E(G)\mid e\subseteq X\}$.
A graph $H$ is an \emph{induced subgraph} of $G$, if $H=G[X]$ for some
subset $X\subseteq V(G)$.
For a subset $Y\subseteq V(G)$ we let $G\setminus Y:=G[V(G)\setminus Y]$. If $Y=\{y\}$
is a singleton set, then we write $G\setminus y$ instead of $G\setminus \{y\}$.
We say that a vertex $v\in V(G)$ is a \emph{cut-vertex}, if $G\setminus v$ has
more connected components than $G$.
A graph $G$ is \emph{connected}, if $G\neq\emptyset$ and any two vertices of $G$ are connected by a path.
A subset $X\subseteq V(G)$ is \emph{connected}, if $G[X]$ is connected.
A graph $G$ that is not connected is said to be \emph{disconnected}. 
A \emph{connected component} of $G$ is a maximal connected subgraph of $G$. 
For an integer $k\geq 0$, 
a graph $G$ is $k$-connected if $G$ cannot be disconnected by removing fewer than $k$ vertices. 
(Hence in particular, a $k$-connected graph has at least $k$ vertices.)

The \emph{length} of a path is the number of its edges.
The \emph{distance} between two vertices $u$ and $v$ of $G$, denoted by $\dist_G(u,v)$, is the minimum
length of a path in $G$ connecting $u$ and $v$ (or infinity, if no such path exists).
A \emph{tree} $T$ is an acyclic connected graph. A \emph{leaf}
of a tree $T$ is a vertex of degree one in $T$. 
We denote the set of leaves of $T$ by $L(T)$.
A vertex in $V(T)\setminus L(T)$ is an \emph{internal vertex}.
For an integer $n\geq 3$ we let $C_n$ denote the cycle with $n$ vertices.
A \emph{complete bipartite} graph is a graph $G$ with a partition
$V(G)=X\dotcup Y$ such that $E(G)=\big\{\{x,y\}\mid x\in X\text{ and }y\in Y\big\}$.

\paragraph{Linear rankwidth}
For defining linear rankwidth, we introduce some notation.
Let $M(G)$ denote the adjacency matrix of a graph $G$, I.e.\ $M(G)$ is 
the $V(G)\times V(G)$ matrix where the columns and the rows are 
indexed by the vertices of $G$, and $M(G)$ has entries in $\{0,1\}$, where 
an entry is $1$ if and only if the corresponding row vertex is incident to the
corresponding column vertex.
For an $A\times B$ matrix $M$ and subsets $X\subseteq A$ and $Y\subseteq B$
we let $M[X,Y]$ denote the $X\times Y$ submatrix $(m_{i,j})_{i\in X,j\in Y}$
of $M$.

The \emph{cutrank function} of a graph $G$ is defined by $\cutrk_G\colon 2^{V(G)}\to\mathbb N$
given by \[\cutrk_G(X):=\rank\big(M(G)[X,V(G)\setminus X]\big),\]
where $\rank$ is the rank function over GF$[2]$.

A tree is \emph{cubic}, if it has at least two vertices and every internal vertex
has degree $3$. A \emph{rank decomposition} of a graph $G$ is a pair $(T,\lambda)$,
where $T$ is a cubic tree and $\lambda\colon L(T)\to V(G)$ is a bijection.
For every edge $e\in E(T)$ the two connected components  of $T\setminus e$ induce
a partition $(X_e,Y_e)$ of $L(T)$. The \emph{width} of $e$ is
defined as $\cutrk_G(\lambda(X_e))$. The \emph{width} of a rank decomposition
$(T,\lambda)$ is the maximum width over all edges of $T$. The \emph{rankwidth}
of $G$ is defined as 
\[\rw(G):=\min\{\text{width of }(T,\lambda)\mid  (T,\lambda)\text{ rank decomposition of }G\}.\]
(If $\left|V(G)\right|\leq 1$, then $G$ has no rank decomposition and we let $\rw(G):=0$.)

A \emph{caterpillar} is a tree $T$ that contains a path such that every vertex of $T$
has distance at most $1$ to some path vertex. 
A \emph{linear rank decomposition} of a graph $G$
is a rank decomposition $(T,\lambda)$ of $G$, where $T$ is a 
caterpillar. The \emph{linear rankwidth}
of $G$ is defined as
\[\lrw(G):=\min\{\text{width of }(T,\lambda)\mid  (T,\lambda)\text{ linear rank decomposition of }G\}.\]
(Again, if $\left|V(G)\right|\leq 1$, then $G$ 
has no linear rank decomposition and we let $\lrw(G):=0$.)

For example, it is easy to verify that cliques, caterpillars and complete bipartite 
graphs have linear rankwidth at most $1$, and that the disjoint union 
$G\dotcup H$ of two graphs $G$ and $H$ satisfies 
$\lrw(G\dotcup H)=\max\{\lrw(G),\lrw(H)\}$.

\begin{ex}\label{ex:seefive}
The cycle $C_5$ satisfies $\lrw(C_5)=2$: In any linear rank decomposition
$(T,\lambda)$ of $C_5$ every edge in $E(T)$ between two internal vertices of $T$
has width $2$, and every edge in $E(T)$ containing a leaf of $T$ has width $1$.
\end{ex}

\begin{rem}\label{rem:four-verts}
	All graphs on four vertices have linear rankwidth at most $1$.
\end{rem}

\paragraph{Vertex-minors, obstructions and distance-hereditary graphs}
Let $G$ be a graph and let $v\in V(G)$. The graph obtained from $G$ by a
\emph{local complementation at} $v$ is the graph $G*v$ with $V(G*v):=V(G)$
and $E(G*v):= E(G)\Delta \big\{\{x,y\}\subseteq N_G(v)\bigmid x\neq y\big\}$.
We say that two graphs $G$ and $H$ are \emph{locally equivalent}, $G\sim H$, if $H$ can be
obtained from $G$ by a sequence of local complementations. Note that this
is indeed an equivalence relation. Figure~\ref{fig:C5equiv} shows all graphs that are
locally equivalent to $C_5$ (up to isomorphism).

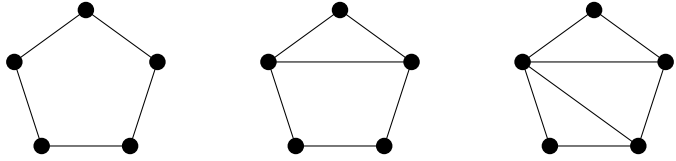
\begin{figure}[h]
\begin{center}	
\begin{tikzpicture}
	\begin{scope}[rotate=18]
	\vertexnodes 
	\coordinate(name) at (0,-1.5);
	\path (0,0) coordinate (origin); 
	\path (0:1cm) coordinate (P0);
	\path (1*72:1cm) coordinate (P1);
	\path (2*72:1cm) coordinate (P2);
	\path (3*72:1cm) coordinate (P3);
	\path (4*72:1cm) coordinate (P4);

	\draw (P0) node{} -- (P1) node{} -- (P2) node{} -- (P3) node{} -- (P4) node{} -- cycle; 
\end{scope}
\end{tikzpicture}
\hspace{1cm}
\begin{tikzpicture}
	\begin{scope}[rotate=18]
	\vertexnodes 
	\path (0,0) coordinate (origin); 
	\path (0:1cm) coordinate (P0);
	\path (1*72:1cm) coordinate (P1);
	\path (2*72:1cm) coordinate (P2);
	\path (3*72:1cm) coordinate (P3);
	\path (4*72:1cm) coordinate (P4);

	\draw (P0) node{} -- (P1) node{} -- (P2) node{} -- (P3) node{} -- (P4) node{} -- cycle; 
	\draw (P0) -- (P2);
	\end{scope}
\end{tikzpicture}
\hspace{1cm}
\begin{tikzpicture}
	\begin{scope}[rotate=18]
	\vertexnodes 
	\coordinate(name) at (0,-1.5);
	\path (0,0) coordinate (origin); 
	\path (0:1cm) coordinate (P0);
	\path (1*72:1cm) coordinate (P1);
	\path (2*72:1cm) coordinate (P2);
	\path (3*72:1cm) coordinate (P3);
	\path (4*72:1cm) coordinate (P4);

	\draw (P0) node{} -- (P1) node{} -- (P2) node{} -- (P3) node{} -- (P4) node{} -- cycle; 
	\draw (P0)  -- (P2) -- (P4);
	\end{scope}
\end{tikzpicture}

\end{center}
\caption{The three graphs that are locally equivalent to $C_5$.}\label{fig:C5equiv}
\end{figure}
A graph $H$ is a \emph{vertex-minor} of a graph $G$, denoted by $H\vm G$, if $H$ can be obtained from
$G$ by a sequence of local complementations and vertex deletions.

In particular, every induced subgraph of $G$ is a vertex-minor of $G$.
For a fixed non-negative integer $k\in\mathbb N$, the
class of all graphs of rankwidth at most $k$ is
closed under taking vertex-minors~\cite{Oum05}.
The following Lemma lists some basic observations
on linear rankwidth that are not hard to verify (cf.~\cite[Prop.~2.6]{Oum05}).

\begin{lem}\label{lem:vm-closed}
Let $G$ be a graph and let $v\in V(G)$.
\begin{enumerate}
	\item Every $X\subseteq V(G)$ satisfies $\cutrk_G(X)=\cutrk_{G*v}(X)$,
	\item $\lrw(G)=\lrw(G*v)$,
	\item $\lrw(G\setminus v)\leq\lrw(G)$,
	\item for fixed $k\in \mathbb N$,
		the class of all graphs of linear rankwidth at most $k$ is
		closed under taking vertex-minors.
\end{enumerate}
\end{lem}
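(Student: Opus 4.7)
The plan is to prove the four items in the order given; item~(1) is the heart of the lemma, and items~(2)--(4) are nearly immediate corollaries. For item~(1), I would use the row/column interpretation of local complementation. Fix $X\subseteq V(G)$ and consider the submatrix $M:=M(G)[X,V(G)\setminus X]$, whose rank equals $\cutrk_G(X)$. Local complementation at $v$ toggles exactly the edges $\{u,w\}$ with $\{u,w\}\subseteq N_G(v)$ and $u\neq w$; since the graph is simple, $v\notin N_G(v)$. Suppose first that $v\in X$. Then $M$ and $M(G*v)[X,V(G)\setminus X]$ differ exactly at positions $(u,w)$ with $u\in N_G(v)\cap X$ and $w\in N_G(v)\setminus X$. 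These are precisely the entries that are flipped if one adds, over GF$[2]$, the row of $M$ indexed by $v$ (which is the indicator vector of $N_G(v)\setminus X$) to each row indexed by a vertex of $N_G(v)\cap X$. Row operations preserve rank, so $\cutrk_G(X)=\cutrk_{G*v}(X)$. The case $v\notin X$ is handled symmetrically via column operations using the column indexed by $v$.

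Item~(2) is then immediate: any linear rank decomposition $(T,\lambda)$ of $G$ is also a linear rank decomposition of $G*v$ (they share a vertex set), and by item~(1) the width of each edge of $T$ is the same relative to both graphs. For item~(3), I would take an optimal linear rank decomposition $(T,\lambda)$ of $G$, let $\ell\in L(T)$ be the leaf with $\lambda(\ell)=v$, delete $\ell$, and suppress the resulting degree-$2$ vertex when one appears; the boundary cases $\left|V(G)\right|\leq 2$ follow from the convention that $\lrw(H)=0$ when $\left|V(H)\right|\leq 1$. The resulting tree $T'$ is again a caterpillar, and $\lambda$ restricts to a bijection $L(T')\to V(G)\setminus\{v\}$. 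Each edge of $T'$ corresponds either to an edge of $T$ whose partition $(X_e,Y_e)$ has had $v$ removed from one side, or to two merged edges of $T$ that already induced the same partition of $L(T)\setminus\{\ell\}$. In either case $M(G\setminus v)[X'_e,Y'_e]$ is obtained from $M(G)[X_e,Y_e]$ by deleting the row or column indexed by $v$, and removing a row or column cannot increase the rank; hence the width of $T'$ is at most that of $T$. Item~(4) follows by iterating items~(2) and~(3) along the sequence of local complementations and vertex deletions witnessing $H\vm G$.

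I do not expect any serious obstacle. The only care required is the case analysis $v\in X$ versus $v\notin X$ in item~(1), and the bookkeeping around suppressing a degree-$2$ vertex in item~(3); both become transparent once one views local complementation and vertex deletion as elementary row/column operations and submatrix extraction, respectively.
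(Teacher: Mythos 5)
Your proposal is correct. The paper itself gives no proof of this lemma (it merely states that the items are ``not hard to verify'' and cites Oum's Prop.~2.6), and your argument --- realizing the effect of local complementation on the cut submatrix as GF$[2]$ row operations (using the row of $v$ when $v\in X$) or column operations (when $v\notin X$) for item~(1), and deleting the leaf mapped to $v$ and suppressing the resulting degree-$2$ vertex for item~(3) --- is precisely the standard verification that the citation refers to.
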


A graph $G$ is \emph{distance-hereditary}, if for every induced connected subgraph $H\subseteq G$
and every pair of vertices $u,v\in V(H)$ we have $\dist_H(u,v)=\dist_G(u,v)$.

\begin{fact}[\cite{BandeltM86,Oum05}]\label{fact:seefive}
	For every graph $G$ the following are equivalent:
\begin{enumerate}
		\item $G$ is distance-hereditary,
		\item $\rw(G)\leq 1$, and
		\item $G$ does not contain
	$C_5$ as a vertex-minor.
\end{enumerate}
\end{fact}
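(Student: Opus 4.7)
The plan is to prove the cycle of implications $(2) \Rightarrow (3) \Rightarrow (1) \Rightarrow (2)$. For $(2) \Rightarrow (3)$, I would verify directly that $\rw(C_5) \geq 2$ by the argument of Example~\ref{ex:seefive}: every cubic tree on five leaves has an internal edge whose induced partition of $V(C_5)$ has part-sizes $\{2,3\}$, and an easy inspection shows every $2\times 3$ submatrix of $M(C_5)$ that can arise this way has rank~$2$ over $GF(2)$. Combined with the rankwidth analogue of Lemma~\ref{lem:vm-closed}(1)--(3) (the standard monotonicity of $\rw$ under local complementation and vertex deletion, cited from~\cite{Oum05} just before that lemma), this gives $\rw(G) \geq 2$ whenever $G$ contains $C_5$ as a vertex-minor.

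For $(3) \Rightarrow (1)$ I would proceed by contraposition using Bandelt--Mulder's structural characterization: every non--distance-hereditary graph contains, as induced subgraph, either a cycle $C_k$ with $k \geq 5$ or one of a small finite list of ``base'' obstructions (the house, the gem, and the domino). For long induced cycles the reduction to $C_5$ is uniform: locally complementing at any vertex $v$ of $C_k$ adds only the single edge between the two cycle-neighbors of $v$, so after deleting $v$ one obtains $C_{k-1}$; iterating reaches $C_5$. For the house, observe (as in Figure~\ref{fig:C5equiv}) that it is already locally equivalent to $C_5$, and for the gem and the domino I would supply explicit short local-complementation-plus-deletion sequences producing $C_5$.

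For $(1) \Rightarrow (2)$ I would use the classical fact that every distance-hereditary graph on at least two vertices has a pendant vertex, a pair of true twins, or a pair of false twins. A rank decomposition $(T,\lambda)$ of width at most~$1$ is then built by induction on $\left|V(G)\right|$: given a reducible vertex $v$ with partner $u$, take a width-$1$ decomposition $(T',\lambda')$ of $G \setminus v$, subdivide the leaf-edge incident with $\lambda'^{-1}(u)$, and attach a new leaf for $v$ at the subdivision vertex. The only genuinely new cut separates $\{u, v\}$ from the remaining vertices, and its rank is at most~$1$ because the rows of $u$ and $v$ in $M(G)$ agree outside $\{u,v\}$ in the twin cases and one of them is zero in the pendant case; all other cuts inherit their width from $(T',\lambda')$.

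The main obstacle will be the case analysis inside $(3) \Rightarrow (1)$ for the small base obstructions of distance-hereditary graphs, since each requires an explicit verified sequence of local complementations producing $C_5$. The long-cycle reduction and the other two implications are essentially routine once the relevant structural facts of Bandelt--Mulder are in place.
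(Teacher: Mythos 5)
The paper offers no proof of this statement at all: it is imported wholesale as a cited Fact from Bandelt--Mulder and Oum, and the surrounding text only draws consequences from it. Your proposal is therefore not comparable to an argument in the paper, but judged on its own it is a correct and essentially standard self-contained proof, following the classical route: $\rw(C_5)=2$ plus vertex-minor monotonicity for $(2)\Rightarrow(3)$; the forbidden-induced-subgraph characterization of distance-hereditary graphs (house, gem, domino, holes) for $(3)\Rightarrow(1)$; and the pruning-sequence characterization (pendant vertex or twin pair) for $(1)\Rightarrow(2)$. The pieces you defer all check out: a cubic tree on five leaves forces a $\{2,3\}$ leaf partition at every internal edge, and each such cut of $C_5$ has rank $2$ over $GF(2)$; the house and the gem are literally the second and third graphs of Figure~\ref{fig:C5equiv}, so they are already locally equivalent to $C_5$, and the domino becomes the gem after one local complementation at a degree-$3$ vertex followed by deletion of that vertex. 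Two small points deserve explicit mention in a write-up. First, in $(1)\Rightarrow(2)$ the old cuts of $(T',\lambda')$ do not merely ``inherit'' their width: each such cut matrix gains the row of $v$ on the side containing $u$, and one must observe that this row is either a copy of $u$'s row or zero (by the twin/pendant property), so the rank is unchanged --- the same observation you already make for the new cut, but it is needed for all cuts. Second, $(2)\Rightarrow(3)$ needs closure of $\{\rw\leq k\}$ under vertex-minors, which the paper indeed cites from Oum just before Lemma~\ref{lem:vm-closed}, so that ingredient is legitimately available. With those clarifications your argument is complete; what it buys over the paper's citation is a proof readable from first principles, at the cost of importing the two Bandelt--Mulder structure theorems, which are themselves nontrivial.
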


Let $\CC$ be a class of graphs that is \emph{closed under taking vertex-minors}, i.e.\ all
graphs $G$ satisfy: if
$G\in\CC$ and $H\vm G$, then $H\in\CC$. 
We say that a graph $G$ is an \emph{obstruction}
for $\CC$, if every graph $H$ with $G\sim H$ satisfies 
\begin{itemize}
	\item $H\notin\CC$, and 
	\item for every $v\in V(H)$ the graph $H\setminus v$ is in $\CC$.
\end{itemize}

A set $\OO$ of graphs is an \emph{obstruction set} for $\CC$, if
$\OO$ is a set of pairwise locally non-equivalent obstructions for $\CC$, 
such that for every graph $G$, $G\in\CC$ if and only if $H\not\vm G$ for all $H\in\OO$.
For example, $\{C_5\}$ is an obstruction set for the class of distance-hereditary graphs,
and for any vertex $v\in V(C_5)$, the set $\{C_5*v\}$ is also an obstruction set for the class 
of distance-hereditary graphs.

\begin{rem}\label{rem:rw-of-obstructions}\mbox{}\\[-7mm]
\begin{enumerate}
		\item Let $G$ be an obstruction for the class of all graphs of linear rankwidth at most $1$.
			Then $\rw(G)\leq\lrw(G) \leq 2$.
		\item The obstruction set for the class of all graphs of linear rankwidth at most~$1$
			is finite.
\end{enumerate}
\end{rem}
\begin{proof}
	For the first statement, let $G$ be such an obstruction, and let $v\in V(G)$. Then $\lrw(G\setminus v)\leq 1$
	by definition, and hence $\lrw(G)\leq 2$, because adding a vertex can increase the cutrank function
	by at most one.
	But $\lrw(G)\leq 2$ implies $\rw(G)\leq 2$. 

	The second statement follows from the first statement, together with the 
	fact that if there is a fixed upper bound on the rankwidth of all graphs in an obstruction set, 
	then the obstruction set is finite~\cite{Oum08}.
\qed\end{proof}

\medskip A vertex of degree $1$ in $G$ is called a \emph{pendant vertex} in $G$. We say that two distinct vertices
$u,v\in V(G)$ are \emph{strong siblings}, if 
$N_G(u)\setminus\{v\}=N_G(v)\setminus\{u\}$ 
and $\{u,v\}\in E(G)$.
Two distinct vertices
$u,v\in V(G)$ are \emph{weak siblings}, if $N_G(u)=N_G(v)$ 
(and $\{u,v\}\notin E(G)$).
A \emph{split pair} is a pair $u,v$ of vertices of $G$ that are either strong or
weak siblings. We will use the following fact.

\begin{fact}[\cite{BandeltM86}]\label{fact:dist-hereditary}
	Every finite distance-hereditary graph $G$ with at least four vertices has
	either at least two disjoint split pairs, or a split pair and a pendant vertex,
	or at least two pendant vertices.
\end{fact}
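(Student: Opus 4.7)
The plan is to argue by induction on $n := |V(G)|$, using the Bandelt--Mulder one-vertex extension theorem as the engine: every finite distance-hereditary graph on at least two vertices has either a pendant vertex or a split pair, and removing such a vertex produces a smaller distance-hereditary graph.

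For the base case $n = 4$, one enumerates the distance-hereditary graphs on four vertices and verifies the claim by inspection. Representative connected cases are $P_4$ (two pendant endpoints), $C_4$ (two disjoint weak sibling pairs), the paw (a strong sibling pair and a disjoint pendant), the diamond $K_4 - e$ (a strong sibling pair plus a disjoint weak sibling pair), $K_{1,3}$ (three pendants), and $K_4$ (many strong sibling pairs); disconnected cases reduce to their components and are handled similarly.

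For the inductive step $n \geq 5$, invoke Bandelt--Mulder to pick a vertex $v \in V(G)$ such that $G' := G \setminus v$ is distance-hereditary and $v$ is either a pendant of $G$ with neighbor $u$, or a strong or weak sibling of some $u \in V(G')$. Applying the inductive hypothesis to $G'$ (which has $n - 1 \geq 4$ vertices) yields two disjoint \emph{witnesses} $F_1, F_2$, each a pendant vertex or a split pair of $G'$. Since $F_1 \cap F_2 = \emptyset$, at least one of them --- call it $F_2$ --- does not contain $u$. A key technical verification is that such an $F_2$ typically survives in $G$: a split pair $\{a, b\}$ of $G'$ with $u \notin \{a, b\}$ survives because the sibling relation forces either both or neither of $a, b$ to lie in $N_{G'}(u)$, so $v$ becomes adjacent either to both or to neither in $G$, preserving the sibling relation. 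Combining the surviving $F_2$ with the witness contributed by $v$ --- the pendant $\{v\}$ if $v$ is a pendant of $G$, or the split pair $\{u, v\}$ if $v$ is a sibling of $u$ --- produces two disjoint witnesses of the kind required.

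The main obstacle is the sub-case where $v$ is a sibling of $u$ and $F_2 = \{x\}$ is a pendant of $G'$ with unique neighbor $u$, because then $v$ becomes adjacent to $x$ in $G$ and $\{x\}$ loses its pendant status. I would handle this by case-analyzing $F_1$: if $F_1$ were a split pair containing $u$, the sibling relation inside $G'$ would force $x$ to have a second neighbor in $G'$, contradicting $x$'s being a pendant, so $F_1$ must be a split pair not containing $u$ (which survives as the second witness, disjoint from $\{u, v\}$) or a second pendant $\{y\}$. In the latter case either $y \notin N_{G'}(u)$ (and $y$ survives as a pendant of $G$ disjoint from $\{u, v\}$), or $y$ is a second $G'$-pendant attached at $u$ (and then $\{x, y\}$ becomes a weak sibling pair in $G$ disjoint from $\{u, v\}$), or $y = u$ forces $\{u, x\}$ to be an isolated edge of $G'$ whose completion by $v$ yields an isolated triangle $\{u, v, x\}$ in $G$; in this final configuration the two disjoint witnesses are produced by pairing one edge of the triangle with a witness drawn from the remaining $\geq 2$ vertices of $G'$ by a direct analysis.
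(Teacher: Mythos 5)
The paper does not actually prove this statement; it imports it verbatim from Bandelt and Mulder, so there is no in-paper argument to compare yours against. Your derivation---bootstrapping the one-vertex-extension theorem (one pendant-or-twin witness per distance-hereditary graph) into two disjoint witnesses by induction---is the natural way to obtain the fact, and the inductive step is essentially complete: the survival argument for a split pair avoiding $u$, the isolation of the one problematic sub-case (a $G'$-pendant attached at $u$ while $v$ is a twin of $u$), and its resolution by case analysis on $F_1$ are all sound.

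There is, however, one genuine gap: the statement as written is \emph{false} for disconnected graphs, so your base-case remark that ``disconnected cases reduce to their components and are handled similarly'' cannot be repaired. The graph $K_1\dotcup K_3$ is distance-hereditary and has four vertices, but it has no pendant vertex (the isolated vertex has degree $0$, the triangle vertices degree $2$), and every split pair lies inside the triangle, so no two split pairs are disjoint; it therefore satisfies none of the three alternatives. The fact must carry a connectivity hypothesis---which is harmless for the paper, since it is only applied to obstructions, and these are connected by Lemma~\ref{lem:obstr-are-connected}---and your induction should be set up accordingly: prove the claim for connected $G$, check that deleting a pendant or twin vertex from a connected graph on at least three vertices leaves it connected so that the inductive hypothesis applies to $G\setminus v$, and observe that your final sub-case ($F_1=\{u\}$ with $\{u,x\}$ an isolated edge of $G'$) then vanishes, since it would force $\{u,v,x\}$ to be an entire component of $G$, contradicting connectivity and $n\geq 5$. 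A smaller point in the same vein: the second alternative of the inductive hypothesis does not literally assert that the pendant vertex avoids the split pair, so you should add the observation that in a connected graph on at least four vertices a pendant vertex contained in its own split pair forces both members of that pair to be pendants, whence two disjoint witnesses can always be extracted before you run your argument.
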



\section{Thread graphs}
Thread graphs were introduced in~\cite{Ganian10} as an alternative characterization
of graphs of linear rankwidth at most $1$. 
In this section we define thread graphs and we exhibit some of their properties.
We define thread graphs in a slightly
different way. It can be easily seen that our definition is indeed equivalent
to the original definition in~\cite{Ganian10}.

A \emph{thread block} is a tuple $(G,(a,b),\bar v,\LL)$, consisting of a graph $G$,
distinguished edge $\{a,b\}\in E(G)$, called the \emph{thread edge} of $G$, 
 an ordering $\bar v=v_1,\ldots,v_n$ of $V(G)$ with $v_1=a$ and $v_n=b$, called a \emph{thread ordering}, and 
 a \emph{thread labeling} $\LL\colon V(G)\to \big\{\{L\},\{R\},\{L,R\}\big\}$ of $V(G)$, such that
 $\text{for all }i,j\in\{1,\ldots,n\}$
\begin{itemize}
	\item $\LL(v_1)=\{R\}$, $\LL(v_n)=\{L\}$, and
	\item for all $1\leq i<j\leq n$,\\ 
		$\{v_i,v_j\}\in E(G)\text{ if and only if } R\in\LL(v_i)
	\text{ and }L\in\LL(v_j).$
\end{itemize}

Intuitively, every vertex $u$ with $L\in\LL(u)$
`sees' all vertices $v$ to its $L$eft that `look' to the right, i.e.\ that have $R\in\LL(v)$.
Symmetrically, every vertex $v$ with $R\in\LL(v)$
`sees' all vertices $u$ to its $R$ight that `look' to the left, i.e.\ that 
have $L\in\LL(u)$. 

Figure~\ref{fig:threadblock}  shows a graph $G$ with an edge $\{a,b\}$ and an ordering
$\bar v=a,a',v,u,b',b$ and a labeling $\LL$ such that $(G,(a,b), \bar v,\LL)$ is 
a thread block.

\begin{figure}
\begin{center}
\begin{tikzpicture}
	\vertexnodes 
\begin{scope}[scale=1]
	\coordinate(name) at (1.5,-1);
	\coordinate(aprime) at (0,1);
	\coordinate(a) at (1,1);
	\coordinate(b) at (2,1);
	\coordinate(bprime) at (3,1);
	\coordinate(u) at (1,0);
	\coordinate(v) at (2,0);
	\draw  (aprime) node[label=above:$a'$]{} -- (a) node[label=above:$a$]{} -- (b) node[label=above:$b$]{}
	-- (bprime) node[label=above:$b'$]{}; 
	\draw (u) node[label=below:$u$]{} --(v) node[label=below:$v$]{} -- (b); 
	\draw (a) -- (u) -- (b);
	\draw (name) node[fill=none]{};
\end{scope}
\end{tikzpicture}
\hspace{1.6cm}
\begin{tikzpicture}
	\vertexnodes 
\begin{scope}[scale=1.3, xshift=-1]
	\coordinate(aprime) at (0.5,0.5);
	\coordinate(a) at (0.5,1);
	\coordinate(b) at (2.5,1);
	\coordinate(bprime) at (2.5,0.5);
	\coordinate(u) at (2,0);
	\coordinate(v) at (1,0);
	\coordinate(inv) at (1,-0.62);
	\draw  (aprime) node[label=left:$a'$,label=below left:$\{L\}$]{} -- (a) node[label=above:$a$,label=left:$\{R\}$]{} -- (b) node[label=above:$b$,label=right:$\{L\}$]{}
	-- (bprime) node[label=right:$b'$,label=below right:$\{R\}$]{}; 
	\draw (u) node[label=below:$u$,label={[name=label node]below:$\{L,R\}$}]{} --
	(v) node[label=below:$v$]{} 
	-- (b); 
	\draw (a) -- (u) -- (b);
	\node[fill=none] at (inv){$\{R\}$};
\end{scope}
\end{tikzpicture}
\end{center}
\caption{A graph $G$ and a thread block $(G,(a,b),a,a',v,u,b',b,\LL)$.}\label{fig:threadblock}
\end{figure}
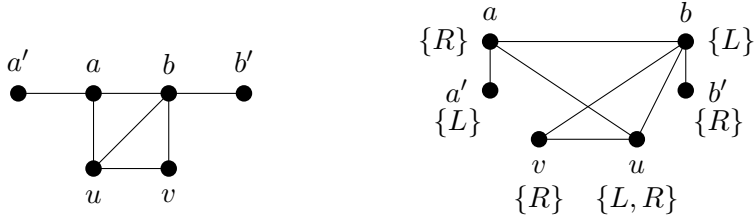

\begin{rem}
	Let $(G,(a,b),\bar v,\LL)$ be a thread block. 
	Then the set of vertices $\LL^{-1}(\{L,R\})\subseteq V(G)$
	induces a complete graph in $G$.
\end{rem}

Let $(G,(a,b),\bar v,\LL)$ be a thread block, with $\bar v=v_1,\ldots,v_n.$ 
For $1\leq i\leq j\leq n$ let $v_i,\ldots ,v_j$ be an interval of $\bar v$.
The interval $v_i,\ldots ,v_j$ is called $\LL$\emph{-constant}, if
every $\ell \in\{i,\ldots,j\}$ satisfies $\LL(v_{\ell})=\LL(v_{i})$.

\begin{lem}[Permuting thread orderings]\label{lem:reorder-vertices}
	Let $(G,(a,b),\bar v,\LL)$ be a thread block.  	
	For $1\leq i\leq j\leq n$ let $v_i,\ldots ,v_j$ be an
	$\LL$-constant
	interval of $\bar v$. For a permutation $\pi(v_i,\ldots v_j)$ of $v_i,\ldots v_j$, 
	let $\bar u$ be the ordering of $V(G)$ obtained by replacing the interval $v_i,\ldots ,v_j$
	in $\bar v$ by the interval $\pi(v_i,\ldots v_j)$.
	Then $(G,(u_1,u_n),\bar u,\LL)$ is a thread block as well.
\end{lem}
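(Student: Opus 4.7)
The plan is to verify each defining clause of a thread block directly for $(G,(u_1,u_n),\bar u,\LL)$. The observation that drives everything is that $\LL$, being a function on $V(G)$ independent of any ordering, is unchanged; and since $\LL$ is constant on $\{v_i,\ldots,v_j\}$, permuting this interval leaves the labeling at every position intact, that is, $\LL(u_k)=\LL(v_k)$ for every $k\in\{1,\ldots,n\}$. This single identity converts the whole verification into a routine check.

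First I would handle the endpoint labels and the thread edge. If $1\notin[i,j]$, then $u_1=v_1=a$ with $\LL(u_1)=\{R\}$; if $1\in[i,j]$, then $i=1$ and $\LL$-constancy on $v_1,\ldots,v_j$ forces the common label to be $\LL(v_1)=\{R\}$, so $\LL(u_1)=\{R\}$ regardless of which vertex ends up in position $1$. The symmetric argument gives $\LL(u_n)=\{L\}$. Applying the adjacency rule of the original thread block at positions $1<n$ then yields $\{u_1,u_n\}\in E(G)$, since $R\in\LL(u_1)$ and $L\in\LL(u_n)$; so this edge is a legitimate thread edge.

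For the main adjacency clause, fix $1\leq p<q\leq n$ and write $u_p=v_{p'}$, $u_q=v_{q'}$. I would split into cases by whether $p,q\in[i,j]$. If at most one of them lies in $[i,j]$, a short check using $p<q$ shows $p'<q'$, because indices outside $[i,j]$ are preserved while indices inside $[i,j]$ map to positions still within $[i,j]$; hence adjacency in $G$ is determined by the original rule applied to $v_{p'},v_{q'}$, and combining this with $\LL(u_p)=\LL(v_{p'})$ and $\LL(u_q)=\LL(v_{q'})$ matches exactly the condition required by $\bar u$. If both $p$ and $q$ lie in $[i,j]$, then $u_p,u_q$ both carry the common label $\LL(v_i)$, and $v_{p'},v_{q'}$ both lie in the $\LL$-constant interval; by the original rule $\{v_{p'},v_{q'}\}\in E(G)$ iff $R\in\LL(v_i)$ and $L\in\LL(v_i)$, i.e.\ iff $\LL(v_i)=\{L,R\}$, which is exactly the condition $R\in\LL(u_p)$ and $L\in\LL(u_q)$.

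There is no real obstacle: the entire proof is bookkeeping resting on the fact that the adjacency rule depends on the ordering only through the strict comparison $p'<q'$, which is preserved by the permutation in every relevant case thanks to $\LL$-constancy. The only mild subtlety worth spelling out is that the interval may itself contain $v_1$ or $v_n$, so the endpoint conditions cannot be read off pointwise from $\bar v$ and instead must be recovered from the constant-label hypothesis, as above.
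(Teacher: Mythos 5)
Your proof is correct, but it takes a more hands-on route than the paper. The paper's entire proof is two sentences: any two vertices in an $\LL$-constant interval form a split pair, hence transposing them preserves the thread-block property, and every permutation is a product of transpositions. You instead verify the definition directly for an arbitrary permutation, via the identity $\LL(u_k)=\LL(v_k)$ and a case split on whether the two positions lie inside the permuted interval; the key observations are that order is preserved whenever at most one position lies in $[i,j]$, and that when both do, the adjacency condition degenerates to $\LL(v_i)=\{L,R\}$ on both sides. Your treatment of the endpoints (recovering $\LL(u_1)=\{R\}$ and $\LL(u_n)=\{L\}$ from constancy when the interval reaches position $1$ or $n$) is a detail the paper's split-pair argument silently absorbs, and is worth making explicit. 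The only cosmetic blemish is that you justify the thread edge $\{u_1,u_n\}\in E(G)$ by ``applying the adjacency rule at positions $1<n$'' before you have verified that rule for $\bar u$; it is cleaner to apply the original rule at the positions $p'<q'$ that $u_1$ and $u_n$ occupy in $\bar v$ (noting that $1$ and $n$ cannot both lie in $[i,j]$, since their labels differ). Overall: a correct, more elementary verification where the paper opts for a shorter conceptual reduction to transpositions of split pairs.
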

\begin{proof}
	Observe that any two vertices in an $\LL$-\emph{constant} interval form a split pair,
	and hence transposing them yields a thread ordering. Since any permutation is a
	product of transpositions, this proves the lemma.
\qed\end{proof}

\medskip 
A \emph{connected thread graph} is a graph $G$ that either consists of a single vertex only, or is obtained from a 
sequence \[(G_1,(a_1,b_1),\bar v^1,\LL^1),\ldots ,(G_m,(a_m,b_m),\bar v^m,\LL^m)\]
of thread blocks $(G_i,(a_i,b_i),\bar v^i,\LL^i)$,
for $i\in[m]$, by identifying
$b_i$ with $a_{i+1}$ for all $i\in[m-1]$.
The path $a_1,\ldots,b_m$ in $G$ of length $m$ 
thus obtained is called the \emph{thread} of $G$.
A \emph{thread graph} is either the empty graph, 
or a disjoint union of connected thread graphs.

The following theorem was proven in~\cite{Ganian10}. We give a brief proof here for completeness.

\begin{theo}[Ganian~\cite{Ganian10}]\label{theo:ganian}
	A graph $G$ has $\lrw(G)\leq 1$ if and only if $G$ is a thread graph.
\end{theo}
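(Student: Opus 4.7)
The plan is to prove the two directions separately.

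For the easier direction ($\Leftarrow$), I would construct an explicit linear rank decomposition from the given thread graph. The identity $\lrw(G \dotcup H) = \max\{\lrw(G), \lrw(H)\}$ reduces the statement to the connected case, so assume $G$ is obtained from thread blocks $(G_i,(a_i,b_i),\bar v^i,\LL^i)$ for $i\in[m]$ by identifying $b_i$ with $a_{i+1}$. The natural ordering is the concatenation $\bar v^1\bar v^2\cdots\bar v^m$ with each shared thread vertex $b_i = a_{i+1}$ written once. Every prefix cut of this ordering lies either in the interior of some block $G_j$'s portion — in which case, since distinct blocks share only their thread vertices and contribute no other crossing edges, the rank of the cut matrix equals that of the corresponding internal cut of $G_j$, which is at most $1$ by the $L/R$ adjacency condition of a thread block — or exactly at a shared thread vertex $b_i$, whose row is the only nonzero one in the cut matrix, again giving rank at most $1$.

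For the harder direction ($\Rightarrow$), I would proceed by induction on $|V(G)|$. The base $|V(G)|\leq 1$ and the reduction of disconnected graphs to their components (via the same max-rule) are routine. So assume $G$ is connected with $n=|V(G)|\geq 2$ and fix a linear rank decomposition of width at most $1$ with ordering $v_1,\dots,v_n$. I would analyze the cut at position $n-2$: its $(n-2)\times 2$ cut matrix has columns indexed by $v_{n-1}$ and $v_n$, and over GF$[2]$ these columns must be linearly dependent. Either one column vanishes — in which case connectivity of $G$ forces the corresponding vertex to be a pendant attached to the other — or the two columns coincide and are nonzero, in which case $v_{n-1}$ and $v_n$ have identical neighborhoods outside $\{v_{n-1},v_n\}$ and therefore form a strong or weak sibling pair (depending on whether $\{v_{n-1},v_n\}\in E(G)$). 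In every case, one of these two vertices, call it $u$, can be removed so that $G' := G\setminus u$ still has $\lrw(G')\leq 1$, and $G$ arises from $G'$ by attaching $u$ as a pendant, strong sibling, or weak sibling of a unique vertex $v\in V(G')$. The induction hypothesis gives a thread graph structure on $G'$.

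The main obstacle I expect is the final reassembly step: showing that starting from a thread graph $G'$ and reattaching $u$ as a pendant, strong sibling, or weak sibling of $v$ produces a thread graph $G$. The key is the flexibility of thread decompositions. I would first establish that any vertex $v$ of a connected thread graph $C$ can be made a thread endpoint of some alternative decomposition of $C$: this uses Lemma~\ref{lem:reorder-vertices} to permute $\LL$-constant intervals inside a block, the freedom to re-choose the thread edge of a block when alternative candidates exist, and the ability to split or merge adjacent blocks at cut vertices of $C$ lying on the thread. Once $v$ is a thread endpoint, the reattachment is direct: in the pendant case, $\{v,u\}$ is appended as a new $K_2$-block at $v$; in the strong-sibling case, the endpoint block of $v$ is extended by inserting $u$ just inside $v$ with label $\{L,R\}$; in the weak-sibling case, it is extended by inserting $u$ just before $v$ with the same singleton label as $v$. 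Verifying the defining $L/R$ conditions in each case then completes the induction.
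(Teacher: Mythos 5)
Your backward direction ($\Leftarrow$) is fine and is essentially the paper's argument. The forward direction is where the problem lies, and it is concentrated exactly where you predicted: the reassembly step. The key lemma you propose --- that any vertex $v$ of a connected thread graph $C$ can be made a thread endpoint of some alternative decomposition --- is false. Take $C=P_5=p_1p_2p_3p_4p_5$ and $v=p_3$: by Remark~\ref{rem:thread-basics}.1 every thread of $C$ must contain all three cut-vertices $p_2,p_3,p_4$, and by Remark~\ref{rem:thread-basics}.4 it is an induced path, hence a subpath of $P_5$ containing $p_2,p_3,p_4$; in every such path $p_3$ is internal. No amount of permuting $\LL$-constant intervals or re-choosing thread edges gets around this, because $p_3$ separates $p_2$ from $p_4$. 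Worse, the implication you are trying to route through this lemma (``$G'$ a thread graph and $u$ attached as a pendant or sibling of $v$ implies $G$ is a thread graph'') is itself false: attach a pendant to the third triangle-vertex of the graph ``$K_3$ with two pendants'' (which is $N$ minus a vertex, hence a thread graph by Lemma~\ref{lem:they-are-obs}) and you obtain the net graph $N$, which is not a thread graph. So after you discard the width-$1$ ordering of $G$ and invoke the induction hypothesis as a black box, the remaining data about $G$ genuinely does not determine that $G$ is a thread graph.

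The fix is to keep the ordering. In your own setup, the vertex $v$ that survives is the \emph{last} vertex of the width-$1$ ordering $v_1,\ldots,v_{n-1}$ of $G'$, and if you strengthen the induction hypothesis to ``every ordering realizing a width-$1$ linear rank decomposition of a connected graph is a thread ordering (so its last vertex is the final thread endpoint $b_m$ with label $\{L\}$)'', then your three reattachment rules (new $K_2$-block for a pendant, insert $u$ with label $\{L,R\}$ before $v$ for a strong sibling, insert $u$ with label $\{L\}$ before $v$ for a weak sibling) all go through. That strengthened statement is what the paper proves, not by induction but by processing the ordering left to right and reading off each label from the rank-$1$ adjacency pattern between processed and unprocessed vertices; your cut-at-position-$n-2$ analysis is the last step of that scan. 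So the route is salvageable, but only with the ordering carried through the induction rather than re-derived afterwards.
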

\begin{proof}
	We may assume that $G$ is connected and $E(G)\neq\emptyset$.  
	We define a \emph{thread ordering 
	of a connected thread graph} $G$
	to be the concatenation of thread orderings of
	a sequence of thread blocks that yield $G$, 
	identifying the shared thread vertices and labeling
	them $\{L,R\}$. Other thread labels are as determined for the thread blocks.

	Assume  $\lrw(G)\leq 1$, and let
	$(T,\lambda)$ be a linear rank decomposition witnessing this.   Consider a
	total ordering $\prec$ of the vertices of $G$ that is consistent with  the
	linear structure of $T$ yielding the linear rankwidth $\leq 1$.  
	We will prove that this ordering is a thread ordering.
	Consider further a vertex $v$ being {\it  processed}. 
	There is a unique binary string expressing adjacencies
	between already processed vertices $u\prec v$ and the vertices $w, v \preceq w$.  

  We use ${\bf e}=0^*$ to represent the pattern of all
	$0$'s, i.e., no adjacencies (``empty neighborhood''). We use
	${\bf n=e}1\{0,1\}^*$ to mean an arbitrary pattern of $0$'s and $1$'s,  including at least
	one $1$ and perhaps no $0$'s. 
	
	\emph{Case 1}: The neighborhood of processed vertices
		is $1{\bf n}$.
		Since $v$ is the first unprocessed vertex it is  adjacent to the
		processed vertices.  After $v$ is processed, it could  either have no
		adjacencies to the remaining unprocessed vertices, in  which case we label it
		$\{L\}$ in the corresponding thread ordering, or the neighborhood
		could  be the same as the neighborhood of other processed vertices, in
		which case it is labeled $\{L,R\}$ in the corresponding thread ordering.

		\emph{Case 2}: The neighborhood of processed vertices is
		$1{\bf e}$. This identifies $v$ as a thread vertex. After processing,  $v$ has either
		an empty neighborhood, in which case we label it $\{L\}$,
		or its adjacencies with unprocessed vertices are expressed by ${\bf n}$, 
		in which case $v$ is labeled $\{L,R\}$ and is an internal
		thread vertex.

		\emph{Case 3}: The neighborhood is $0{\bf n}$.  After $v$ is  processed, it must have a neighborhood
		${\bf n}$ as do other
		processed vertices, in which case it is labeled $\{R\}$ in the thread
		ordering. This is a thread labeling proving that $G$ is a thread graph.

		For the converse, assume that $G$ is a thread graph with a 
		given thread ordering $\prec$ of $V(G)$. We define a linear
		rank decomposition $(T,\lambda)$ by mapping the leaves of $T$ to the vertices of
		$G$ in such a way that the linear structure of $(T,\lambda)$ respects $\prec$.
		It is straightforward to verify that the width of $(T,\lambda)$ is $\leq 1$.
		\qed\end{proof}

		\begin{rem}[Basic properties of thread graphs]\label{rem:thread-basics}\mbox{}\\[-7mm] 
	\begin{enumerate}
		\item Let $G$ be a connected thread graph with thread $a_1,\ldots,a_m$ and 
			let $X\subseteq V(G)$ be the set of all cut-vertices of $G$. Then
			$X\subseteq\{a_1,\ldots a_m\}$ and $\{a_2,\ldots a_{m-1}\}\subseteq X$.
		\item Let $G$ be a $2$-connected thread graph. Then every thread in $G$
			consists of a single edge. 
		\item 
			Let $G$ be a connected thread graph obtained from the sequence
			\[S:=(G_1,(a_1,b_1),\bar v^1,\LL^1),\ldots ,(G_m,(a_m,b_m),\bar
			v^m,\LL^m)\] of thread blocks, where $P=a_1,\ldots,b_m$ is a thread in $G$.
			Then, for any interval $S'$ of $S$, the thread graph $G'$ obtained
			from $S'$ is a connected induced subgraph of $G$ with thread $P':=P\cap G'$.
		\item Every thread in a connected thread graph
			$G$ is an induced path in $G$.
\end{enumerate}
\end{rem}
\begin{proof}
	The first statement is proved in~\cite{Ganian10}, and it implies the second statement.
	The last two statements follow from the definition of connected thread graphs.
\qed\end{proof}

\begin{lem}[Removing pendant vertices]\label{lem:safe-red}
	Let $G$ be a graph, let 
			$u\in V(G)$ be a pendant vertex with unique neighbor $c\in V(G)$, such that
			$c$ is a cut-vertex of $G\setminus u$. 
			Then $G$ is a thread graph if and only if $G\setminus v$ is a thread graph.
\end{lem}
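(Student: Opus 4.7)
The plan is to verify both directions of the equivalence. The forward implication is immediate from Theorem~\ref{theo:ganian}: if $G$ is a thread graph then $\lrw(G)\leq 1$, hence $\lrw(G\setminus u)\leq 1$ by Lemma~\ref{lem:vm-closed}(3), and Theorem~\ref{theo:ganian} applied in the reverse direction yields that $G\setminus u$ is again a thread graph. (I read the ``$G\setminus v$'' in the statement as a typo for ``$G\setminus u$''.)

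For the converse, assume $G\setminus u$ is a thread graph. Let $H_c$ be the connected component of $G\setminus u$ containing $c$ and let $K$ be the (possibly empty) union of the remaining components. Because $u$ is adjacent only to $c$, we have $G=G'\dotcup K$ where $G':=G[V(H_c)\cup\{u\}]$, so since a disjoint union of thread graphs is a thread graph, it suffices to show that $G'$ is a connected thread graph. A short observation is that the hypothesis that $c$ is a cut-vertex of $G\setminus u$ forces $c$ to be a cut-vertex of $H_c$ (removing $c$ does not affect the other components of $G\setminus u$); in particular $H_c$ is nontrivial and, by Remark~\ref{rem:thread-basics}(1), $c$ lies on the thread of $H_c$. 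Fix a sequence of thread blocks $(G_1,(a_1,b_1),\bar v^1,\LL^1),\ldots,(G_m,(a_m,b_m),\bar v^m,\LL^m)$ yielding $H_c$ and pick $i\in[m]$ with $c\in\{a_i,b_i\}$.

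The central step is a purely local modification of the single block $G_i$. If $c=a_i$, insert $u$ into the thread ordering $\bar v^i$ immediately after $c$ and extend $\LL^i$ by $\LL^i(u):=\{L\}$; symmetrically, if $c=b_i$, insert $u$ immediately before $c$ and set $\LL^i(u):=\{R\}$. Using the adjacency rule of thread blocks, a direct check shows that in the resulting block $G_i'$ the vertex $u$ becomes adjacent to $c$ and to no other vertex, while every edge of $G_i$ is preserved. Since the thread endpoints $a_i,b_i$ of $G_i'$ coincide with those of $G_i$, the sequence $(G_1,\ldots,G_{i-1},G_i',G_{i+1},\ldots,G_m)$ respects all the identifications $b_{j-1}=a_j$ and thus presents $G'$ as a connected thread graph. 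The only delicate point is the choice of label on $u$ opposite to the label of $c$, ensuring that the edge $\{c,u\}$ is created while $u$ gains no spurious neighbor inside $G_i'$; no manipulation via vertex-minors or modification of several blocks simultaneously is required.
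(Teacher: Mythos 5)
Your proposal is correct and follows essentially the same route as the paper: both directions are handled identically, with the converse obtained by locating $c$ as an endpoint of a thread block of $G\setminus u$ and inserting $u$ next to it with the opposite label $\{L\}$ or $\{R\}$. Your version merely adds the (harmless and welcome) details of reducing to the component of $c$ and explicitly verifying the thread-block adjacency condition, which the paper leaves implicit.
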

\begin{proof}	
	If $G$ is a thread graph, then, using the equivalence between thread graphs and
	graphs of linear rankwidth at most $1$ (Theorem~\ref{theo:ganian}), by
	Lemma~\ref{lem:vm-closed}.3 the graph $G\setminus v$ is a thread graph.
	
	Conversely, let $G\setminus v$ be a thread graph. Since $c$ is a cut-vertex in 
	$G\setminus v$, $c$ lies on every thread. 
	Choose a thread block of $G$ containing $c$. Then $c$ is either the
	first or the last vertex in the thread ordering of that thread block.
	If $c$ is the first vertex, add $u$ immediately after
	$c$ to the thread ordering and label it $\{L\}$. Symmetrically,
	if $c$ is the last vertex, add $u$ immediately before
	$c$ to the thread ordering and label it $\{R\}$. Hence $G$ is a thread graph.
\qed\end{proof}

\begin{lem}[Thread graphs with two `whiskers']\label{lem:pending-vertices-help}
		Let $G$ be a thread graph. Assume that $G$ contains a $2$-connected 
		subgraph $G_0\subseteq G$ and two vertices
		$u,v\in V(G)$ such that $V(G)= V(G_0)\cup\{u,v\}$, and $u$ and $v$ are pendant vertices in $G$.
		Let $a\in V(G_0)$ be the neighbor of $u$ and let $b\in V(G_0)$ be the neighbor of $v$,
		and assume that $a\neq b$. Then $\{a,b\}\in E(G)$ and there exists a
		thread ordering $\bar v$ of $V(G)$ and a labeling $\LL$ such that $(G,(a,b),\bar v,\LL)$ 
		is a thread block.
	\end{lem}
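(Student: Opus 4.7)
The plan is to reduce to a structural analysis of the connected thread graph $G$ and then explicitly construct the thread ordering. First I would replace $G_0$ by the induced subgraph $G[V(G_0)]$, which is still $2$-connected; then $G_0$ is an induced subgraph of the thread graph $G$, hence itself a thread graph by Theorem \ref{theo:ganian} and Lemma \ref{lem:vm-closed}.3, and as a $2$-connected thread graph $G_0$ can be realized as a single thread block by Remark \ref{rem:thread-basics}.2. Since $G_0$ is connected and $u, v$ are attached to it, $G$ is a connected thread graph, so $G$ decomposes as some sequence of $m$ thread blocks with thread $t_1, \ldots, t_{m+1}$.

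Next I would pin down the cut-vertex structure of $G$: any $w \in V(G_0) \setminus \{a, b\}$ leaves $G_0 \setminus w$ connected (by $2$-connectedness) and keeps $u, v$ attached, so $G \setminus w$ is connected, whereas $G \setminus a$ isolates $u$ and $G \setminus b$ isolates $v$. Hence the cut-vertices of $G$ are exactly $\{a, b\}$. By Remark \ref{rem:thread-basics}.1, the internal thread vertices are cut-vertices, so $\{t_2, \ldots, t_m\} \subseteq \{a, b\}$ and thus $m \leq 3$. The key auxiliary observation I would establish is the following pendant-in-thread-block principle: in a thread block $(H,(p,q),\bar w, \LL)$, a pendant $x$ of $H$ with unique neighbor $y$ satisfies $\LL(x) \in \{\{L\},\{R\}\}$ and $y \in \{p,q\}$; for example, $\LL(x) = \{R\}$ forces $x$ to be adjacent to the $\{L\}$-endpoint $q$, so the degree-one requirement gives $y = q$ (and rules out any other later vertex having $L$ in its label), while the label $\{L,R\}$ is impossible because $x$ would then see both $p$ and $q$.

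Armed with this I would carry out the case analysis on $m$. For $m = 1$, applying the pendant principle to both $u$ and $v$ forces $\{a,b\}$ to be the two thread endpoints of $G$; after possibly reversing to put $a$ first, Lemma \ref{lem:reorder-vertices} lets me slide $u$ into the $\{L\}$-constant interval immediately after $a$ and $v$ into the $\{R\}$-constant interval immediately before $b$. For $m = 2$, WLOG $t_2 = a$ and $t_3 = b$; splitting $G_0 \setminus a$ into its contributions from the two blocks and using $2$-connectedness of $G_0$ rules out all possibilities except $V(B_1) = \{u,a\}$, so $B_1$ is a single edge and $B_2$ is a thread block on $V(G_0) \cup \{v\}$ with thread $(a,b)$. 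For $m = 3$ the same argument applied to both cut-vertices forces $B_1 = \{u,a\}$, $B_3 = \{v,b\}$, and $B_2$ a thread block on $V(G_0)$ with thread $(a,b)$. In all cases the central thread block has thread edge $\{a,b\}$, so $\{a,b\} \in E(G)$, and the final thread block for $G$ with thread $(a,b)$ is obtained by taking the ordering of this central block and inserting $u$ immediately after $a$ (with label $\{L\}$) and $v$ immediately before $b$ (with label $\{R\}$); a direct check using the thread-block edge rule shows $u$'s only neighbor is $a$, $v$'s only neighbor is $b$, and the remaining edges agree with those of $G$. I expect the main obstacle to be this case analysis, where the outer blocks must be forced to degenerate to single edges; the critical point is that the $2$-connectedness of $G_0$ has to survive the removal of each cut-vertex $a$ (or $b$), which tightly constrains which block can contain what.
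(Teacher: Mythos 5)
Your proof is correct, but it takes a noticeably different route from the paper's. Both arguments turn on the same structural fact---since $a$ and $b$ are the only cut-vertices of $G$ and $G_0$ is $2$-connected, the thread of $G$ restricted to $G_0$ must collapse to the single edge $\{a,b\}$---but the paper gets this in three lines: it takes an arbitrary thread $P$ of $G$, puts $a$ and $b$ on $P$ by Remark~\ref{rem:thread-basics}.1, applies Remark~\ref{rem:thread-basics}.3 to see that $P\cap G_0$ is a thread of $G_0$, and applies Remark~\ref{rem:thread-basics}.2 to conclude that this thread is a single edge, necessarily $\{a,b\}$; the construction of the thread block $(G,(a,b),\bar v,\LL)$ is then compressed into ``the lemma follows.'' You instead avoid Remark~\ref{rem:thread-basics}.3 entirely: you determine the cut-vertex set of $G$ exactly, deduce the bound $m\le 3$ on the number of thread blocks from it, prove an auxiliary pendant-vertex principle for thread blocks (which is correct: label $\{L,R\}$ forces adjacency to both endpoints, while label $\{L\}$ resp.\ $\{R\}$ forces adjacency to the first resp.\ last vertex of the ordering), and run a case analysis on $m\in\{1,2,3\}$ in which the $2$-connectedness of $G_0$ forces the outer blocks to degenerate to the single edges $\{u,a\}$ and $\{v,b\}$. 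What your version buys is that the final step---inserting $u$ just after $a$ with label $\{L\}$ and $v$ just before $b$ with label $\{R\}$ and checking the thread-block edge rule---is made fully explicit, which is precisely the part the paper leaves to the reader; the cost is length, since the interval-restriction property of Remark~\ref{rem:thread-basics}.3 does the work of your entire case analysis in a single application.
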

	\begin{proof}
		Let $P$ be a thread in $G$. Since $a$ and $b$ are cut vertices, by Remark~\ref{rem:thread-basics}.1,
		$a$ and $b$ lie on $P$. By Remark~\ref{rem:thread-basics}.3, $P\cap G_0$ is a thread for $G_0$,
		and by Remark~\ref{rem:thread-basics}.2, $P\cap G_0$
		consists of a single edge $e$ only. Since $a,b\in V(G_0)\cap V(P)$,
		$e=\{a,b\}$ and the lemma follows.
	\qed\end{proof}

\section{The obstruction set for linear rankwidth at most~$1$}

From now on, let $\CC:=\{G\text{ graph}\mid \lrw(G)\leq 1\}$ 
denote the class of all graphs of linear rankwidth at most $1$.
We first show that the graphs $C_5$, $N$ and $Q$ shown in Figure~\ref{fig:obstructions}
are obstructions for $\CC$.
The harder part will be to show that the set $\{C_5, N, Q\}$ is the complete obstruction set.
\begin{lem}\label{lem:they-are-obs}
	The three graphs $C_5$, $N$ and $Q$ 
	are obstructions for the class of all graphs of linear rankwidth at most~$1$.
\end{lem}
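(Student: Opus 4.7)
The plan is to verify the two obstruction conditions for each $G\in\{C_5,N,Q\}$: that $\lrw(H)\ge 2$ and $\lrw(H\setminus v)\le 1$ for every $H\sim G$ and every $v\in V(H)$. First I observe that local complementation at any $u\ne v$ commutes with deletion of $v$, i.e.\ $(H*u)\setminus v=(H\setminus v)*u$, since both sides toggle exactly the edges inside $N_H(u)\setminus\{v\}$ and delete $v$. Combined with Lemma~\ref{lem:vm-closed}.2, this reduces the task to checking $\lrw(G)\ge 2$ and $\lrw(G\setminus v)\le 1$ for every $v\in V(G)$ and for each of the three graphs.

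For the lower bound $\lrw(G)\ge 2$: the case $G=C_5$ is Example~\ref{ex:seefive}. For $G\in\{N,Q\}$ I invoke Theorem~\ref{theo:ganian} and show that neither graph is a thread graph. The key ingredient is Remark~\ref{rem:thread-basics}: in a connected thread graph the thread is an induced path containing every cut vertex, and every internal thread vertex is itself a cut vertex. In $N$ the cut vertices are the three triangle vertices, which are pairwise adjacent; any thread would have to contain all three, but an induced path cannot contain three pairwise adjacent vertices, and the pendants are not cut vertices so they cannot appear internally on the thread. A short case distinction on the thread length $k\in\{2,3,4,5\}$ rules out every possibility. In $Q$ the cut vertices are exactly the two pendant-bearing cycle vertices, which are non-adjacent; hence $k=2$ is impossible (a thread with two vertices is an edge), for $k\in\{3,4\}$ the two cut vertices would be forced to be consecutive on an induced path, again impossible, and $k\ge 5$ would require at least three cut vertices.

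For the upper bound $\lrw(G\setminus v)\le 1$ I check each deletion up to the automorphisms of $G$. Every $C_5\setminus v$ equals $P_4$, a four-vertex graph, and has $\lrw\le 1$ by Remark~\ref{rem:four-verts}. In $N$, deleting a pendant leaves a triangle with two pendants, and deleting a triangle vertex leaves $P_4\dotcup K_1$; the latter is a caterpillar together with an isolated vertex, hence a thread graph, while the former can be realised as a single thread block whose thread edge joins the two remaining cut vertices, with the triangle apex placed between them in the ordering and labelled $\{L,R\}$. In $Q$, deleting a pendant leaves the $4$-cycle with one pendant, deleting a degree-$3$ vertex leaves $K_{1,3}\dotcup K_1$, and deleting a degree-$2$ vertex leaves $P_5$; the latter two are a caterpillar (with an isolated vertex) and a caterpillar, respectively, and the $4$-cycle with one pendant can be realised as a single thread block whose thread edge runs between the pendant-bearing cycle vertex and an adjacent cycle vertex, with the labels on the remaining vertices chosen so that the cycle edges appear and the non-edges incident to the pendant are preserved.

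The main obstacle is the explicit thread-block constructions for the two 5-vertex graphs obtained by deleting a pendant from $N$ or from $Q$. Neither is a caterpillar, clique, or complete bipartite graph, so one cannot simply appeal to the preconstructed examples listed in the excerpt; instead the thread edge, the ordering, and the labelling must be chosen by hand and verified against the thread-block axioms. All other subcases are immediate from those examples or from Remark~\ref{rem:four-verts}.
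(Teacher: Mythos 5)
Your verification of the lower bounds and your explicit thread blocks for the five\--vertex deletions are fine and match the paper's approach (the paper likewise argues via Theorem~\ref{theo:ganian} and Remark~\ref{rem:thread-basics} that $N$ and $Q$ are not thread graphs, and leaves the thread orderings to the reader). The genuine gap is in your reduction of the second obstruction condition. The commutation identity $(H*u)\setminus v=(H\setminus v)*u$ is correct, but it only applies to local complementations at vertices $u\neq v$; it says nothing about complementations \emph{at the deleted vertex itself}. If the sequence taking $G$ to $H$ performs a complementation at $v$, then $H\setminus v$ need not be locally equivalent to $G\setminus v$, so checking $\lrw(G\setminus v)\leq 1$ for the three canonical representatives does not establish $\lrw(H\setminus v)\leq 1$ for every $H\sim G$, which is what the definition of an obstruction requires. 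A concrete failure inside your own case analysis: for a triangle vertex $v$ of $N$, the graph $(N*v)\setminus v$ is isomorphic to $P_5$, which is connected, whereas $N\setminus v\cong P_4\dotcup K_1$ is disconnected; since local complementation preserves the partition into connected components, these two graphs are not locally equivalent, and $(N*v)\setminus v$ is simply not covered by any of the deletions you check. (It happens to be a thread graph, so the lemma is not endangered, but your proof does not show this.)

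To close the gap you must either enumerate the full local\--equivalence classes of $C_5$, $N$ and $Q$ --- they contain $3$, $5$ and $5$ graphs respectively, which is exactly what Figures~\ref{fig:C5equiv}, \ref{fig:N-equivalent} and~\ref{fig:Q-equivalent} record, and this is the route the paper takes --- and exhibit a thread ordering for every one\--vertex deletion of every member, or else invoke the standard fact (not stated in this paper) that every vertex\--minor of $G$ on vertex set $V(G)\setminus\{v\}$ is locally equivalent to one of $G\setminus v$, $(G*v)\setminus v$, or $(G*v*w*v)\setminus v$ for a neighbor $w$ of $v$, and check those additional graphs as well. Either way, the amount of case checking is strictly larger than what your proposal performs.
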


\noindent\emph{Proof sketch.} 
	We first have to show that none of the three graphs $C_5$, $N$ and $Q$
	have linear rankwidth $1$ (which, by Lemma~\ref{lem:vm-closed}.2,
	implies that no graph locally equivalent to $C_5$, $N$ or $Q$ has linear rankwidth $1$). 
	Second, for every graph $H$ that is locally equivalent to 
	one of the three graphs $C_5$, $N$ and $Q$, and for every $v\in V(H)$, we have to show that
	$\lrw(H\setminus v)\leq 1$. 
	
	For the first part, we have already seen in Example~\ref{ex:seefive} that
	$\lrw(C_5)=2$. 
	Using Theorem~\ref{theo:ganian}, it suffices to show that neither $N$ nor $Q$ contains a thread.
	By Remark~\ref{rem:thread-basics}, every thread in $N$ would contain the three cut-vertices, 
	but the cut-vertices do not lie on an induced path, which is necessary by 
	Remark~\ref{rem:thread-basics}.4. Similarly, the two cut-vertices of $Q$ would
	have to lie on every thread, but there is no path connecting them that only
	uses cut-vertices, which would be necessary by Remark~\ref{rem:thread-basics}.1.
	Hence $C_5$, $N$ and $Q$ are not thread graphs.
	
	For the second part, using Theorem~\ref{theo:ganian}, for every graph $H$ that is 
	locally equivalent to 
	one of the three graphs $C_5$, $N$ and $Q$, and for every $v\in V(H)$,
	one has to exhibit a 
	thread, a thread ordering and a corresponding labeling.
	This is not hard to do and is left to the reader. 
	Figures~\ref{fig:C5equiv}, \ref{fig:N-equivalent}, and~\ref{fig:Q-equivalent} show the
	classes of graphs that are locally equivalent to $C_5$, $N$, and $Q$, respectively.
\qed

\begin{figure}[h]
\begin{center}	
\begin{tikzpicture}
\begin{scope}[scale=.7]
	\vertexnodes 
	\coordinate(top) at (90:0.7cm);
	\coordinate(left) at (210:0.7cm);
	\coordinate(right) at (-30:0.7cm);
	\coordinate(extop) at (90:1.5cm);
	\coordinate(exleft) at (210:1.5cm);
	\coordinate(exright) at (-30:1.5cm);
	\draw  (top) node{} -- (left) node{} -- (right) node{} -- cycle;
	\draw (top) -- (extop) node{};
	\draw (left) -- (exleft) node{};
	\draw (right) -- (exright) node{};
\end{scope}
\end{tikzpicture}
\hspace{1cm}
\begin{tikzpicture}
	\vertexnodes 
\begin{scope}[scale=.7]
	\coordinate(aprime) at (0,1);
	\coordinate(a) at (1,1);
	\coordinate(x) at (2,2);
	\coordinate(y) at (2,0);
	\coordinate(b) at (3,1);
	\coordinate(bprime) at (4,1);
	\draw  (aprime) node{} -- (a) node{} -- (x) node{} --(b) node{} --(bprime) node{}; 
	\draw (y) node{} -- (a);
	\draw (y)  -- (x);
	\draw (y)  -- (b);
\end{scope}
\end{tikzpicture}
\hspace{1cm}
\begin{tikzpicture}
	\vertexnodes 
\begin{scope}[scale=.7]
	\coordinate(aprime) at (0,1);
	\coordinate(a) at (1,1);
	\coordinate(x) at (2,2);
	\coordinate(y) at (2,0);
	\coordinate(b) at (3,1);
	\coordinate(bprime) at (4,1);
	\draw  (aprime) node{} -- (a) node{} -- (x) node{} --(b) node{} --(bprime) node{}; 
	\draw (y) node{} -- (a);
	\draw (y)  -- (b);
	\draw (aprime)  -- (x);
	\draw (aprime)  -- (y);
\end{scope}
\end{tikzpicture}
\hspace{1cm}
\begin{tikzpicture}
	\vertexnodes 
\begin{scope}[scale=.7]
	\coordinate(aprime) at (0,1);
	\coordinate(a) at (1,1);
	\coordinate(x) at (2,2);
	\coordinate(y) at (2,0);
	\coordinate(b) at (3,1);
	\coordinate(bprime) at (4,1);
	\draw  (aprime) node{} -- (a) node{} -- (x) node{} --(b) node{} --(bprime) node{}; 
	\draw (y) node{} -- (a);
	\draw (y)  -- (b);
	\draw (y)  -- (x);
	\draw (aprime)  -- (x);
	\draw (aprime)  -- (y);
	\draw (bprime)  -- (x);
	\draw (bprime)  -- (y);
\end{scope}
\end{tikzpicture}
\hspace{1cm}
\begin{tikzpicture}
	\begin{scope}[scale=1]
	\vertexnodes 
	\coordinate(a) at (0,0);
	\coordinate(aprime) at (-0.3,0.3);
	\path (0,0) coordinate (origin); 
	\path (0:1cm) coordinate (P0);
	\path (1*90:1cm) coordinate (P1);
	\path (2*90:1cm) coordinate (P2);
	\path (3*90:1cm) coordinate (P3);

	\draw (P0) node{} -- (P1) node{} -- (P2) node{} -- (P3) node{} -- cycle; 
	\draw (aprime) node{} -- (a) node{};
	\draw (a) -- (P0);
	\draw (a) -- (P1);
	\draw (a) -- (P2);
	\draw (a) -- (P3);
	\end{scope}
\end{tikzpicture}
\end{center}
\caption{The five graphs that are locally equivalent to the net graph $N$.}\label{fig:N-equivalent}
\end{figure}
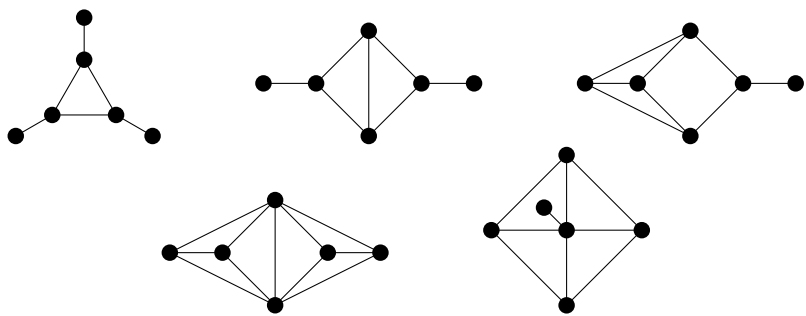

\begin{figure}[h]
\begin{center}
\begin{tikzpicture}
	\vertexnodes 
\begin{scope}[scale=.7]
	\coordinate(aprime) at (0,1);
	\coordinate(a) at (1,1);
	\coordinate(x) at (2,2);
	\coordinate(y) at (2,0);
	\coordinate(b) at (3,1);
	\coordinate(bprime) at (4,1);
	\draw  (aprime) node{} -- (a) node{} -- (x) node{} --(b) node{} --(bprime) node{}; 
	\draw (y) node{} -- (a);
	\draw (y)  -- (b);
\end{scope}
\end{tikzpicture}
\hspace{1cm}
\begin{tikzpicture}
	\vertexnodes 
\begin{scope}[scale=.7]
	\coordinate(aprime) at (0,1);
	\coordinate(a) at (1,1);
	\coordinate(x) at (2,2);
	\coordinate(y) at (2,0);
	\coordinate(b) at (3,1);
	\coordinate(bprime) at (4,1);
	\draw  (aprime) node{} -- (a) node{} -- (x) node{} --(b) node{} --(bprime) node{}; 
	\draw (y) node{} -- (a);
	\draw (a)  -- (b);
	\draw (y)  -- (b);
\end{scope}
\end{tikzpicture}
\hspace{1cm}
\begin{tikzpicture}
	\vertexnodes 
\begin{scope}[scale=.7]
	\coordinate(aprime) at (0,1); 
	\coordinate(a) at (1,1); 
	\coordinate(x) at (2,2); 
	\coordinate(y) at (2,0);
	\coordinate(b) at (3,1); 
	\coordinate(bprime) at (4,1); 
	\draw  (aprime) node{} -- (a) node{} -- (x) node{} --(b) node{} --(bprime) node{}; 
	\draw (y) node{} -- (a);
	\draw (y)  -- (b);
	\draw (x)  -- (y);
	\draw (aprime)  -- (x);
	\draw (aprime)  -- (y);
\end{scope}
\end{tikzpicture}
\hspace{1cm}
\begin{tikzpicture}
	\vertexnodes 
\begin{scope}[scale=.7]
	\coordinate(aprime) at (0,1);
	\coordinate(a) at (1,1);
	\coordinate(x) at (2,2);
	\coordinate(y) at (2,0);
	\coordinate(b) at (3,1);
	\coordinate(bprime) at (4,1);
	\draw  (aprime) node{} -- (a) node{} -- (x) node{} --(b) node{} --(bprime) node{}; 
	\draw (y) node{} -- (a);
	\draw (y)  -- (b);
	\draw (aprime)  -- (x);
	\draw (aprime)  -- (y);
	\draw (bprime)  -- (x);
	\draw (bprime)  -- (y);
\end{scope}
\end{tikzpicture}
\hspace{1cm}
\begin{tikzpicture}
	\begin{scope}[scale=1]
	\vertexnodes 
	\coordinate(a) at (0,0);
	\coordinate(aprime) at (-0.3,0.3);
	\coordinate(b) at (0.3,-0.3);
	\path (0,0) coordinate (origin); 
	\path (0:1cm) coordinate (P0);
	\path (1*90:1cm) coordinate (P1);
	\path (2*90:1cm) coordinate (P2);
	\path (3*90:1cm) coordinate (P3);

	\draw (P0) node{} -- (P1) node{} -- (P2) node{} -- (P3) node{} -- cycle; 
	\draw (b) node{} -- (P0);
	\draw (b) -- (P1);
	\draw (b) -- (P2);
	\draw (b) -- (P3);
	\draw (aprime) node{} -- (P0);
	\draw (aprime) -- (P1);
	\draw (aprime) -- (P2);
	\draw (aprime) -- (P3);
	\end{scope}
\end{tikzpicture}
\end{center}
\caption{The five graphs that are locally equivalent to the half-cube $Q$.}\label{fig:Q-equivalent}
\end{figure}
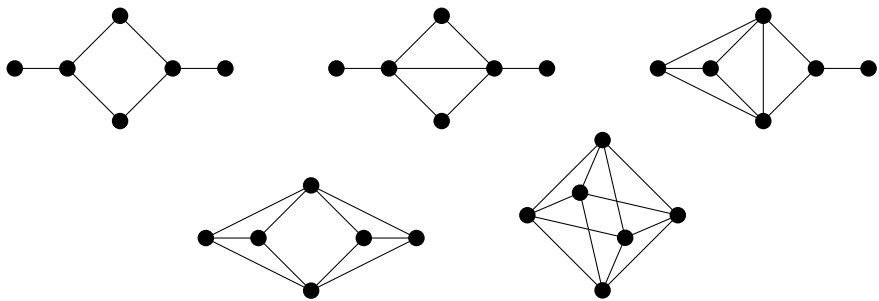

\begin{lem}\label{lem:obstr-are-connected}
	Every obstruction $G$ for $\CC$ is connected.
\end{lem}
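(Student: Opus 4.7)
The plan is to argue by contradiction using the additivity of linear rankwidth over disjoint union, which was noted in the preliminaries: $\lrw(G_1\dotcup G_2)=\max\{\lrw(G_1),\lrw(G_2)\}$. The definition of an obstruction, specialized to $H=G$ (since $G\sim G$), tells us two things: $G\notin\CC$, so $\lrw(G)\geq 2$, and $G\setminus v\in\CC$ for every $v\in V(G)$, so $\lrw(G\setminus v)\leq 1$.

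Suppose for contradiction that $G$ is disconnected, so we can write $G=G_1\dotcup G_2$ with both $G_1$ and $G_2$ nonempty. By the disjoint union formula, $\max\{\lrw(G_1),\lrw(G_2)\}=\lrw(G)\geq 2$, so without loss of generality $\lrw(G_1)\geq 2$. Now pick any $v\in V(G_2)$. Then $G\setminus v=G_1\dotcup(G_2\setminus v)$ (where the second summand may be empty, but this is harmless), and therefore $\lrw(G\setminus v)\geq\lrw(G_1)\geq 2$, contradicting $G\setminus v\in\CC$.

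There is no real obstacle here; the argument is essentially one line once one invokes the additivity of $\lrw$ and reads off the right consequence of the obstruction definition. The only thing to be careful about is applying the ``$H\setminus v\in\CC$ for every $v$'' condition to $H:=G$ itself rather than only to locally non-equivalent representatives, and handling the degenerate case where $G_2$ consists of a single vertex (in which case $G\setminus v=G_1$ directly, and the same inequality applies).
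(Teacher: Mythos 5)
Your proof is correct and follows essentially the same route as the paper: a contradiction argument based on the additivity of linear rankwidth over disjoint unions combined with the fact that deleting any vertex from an obstruction lands in $\CC$. The paper phrases this via thread graphs (deleting one vertex from each component to see that both components, hence their union, are thread graphs), but this is the same underlying argument.
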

\begin{proof}
	By contradiction.  Suppose that $G=G_1\dotcup G_2$, where
	$G_1$ and $G_2$ are non-empty unions of connected components of $G$.
	Let $x_1\in V(G_1)$ and $x_2\in V(G_2)$. Since $G$ is an obstruction, $G\setminus x_1$ 
	is a thread graph, implying $G_2$ is a thread graph. Symmetrically, 
	$G\setminus x_2$  is a thread graph, implying  $G_1$ is a thread graph. But the disjoint union of
	thread graphs is a thread graph, so $G$ is a thread graph as well, a contradiction.
\qed\end{proof}

\begin{lem}\label{lem:cut-into-three}
	If $G$ is an obstruction for $\CC$, then $G$ does not contain
	a cut-vertex 
	$v\in V(G)$ such that $G\setminus v$ has more than two components.
\end{lem}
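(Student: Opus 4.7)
Assume for contradiction that $G$ is an obstruction for $\CC$ and $v$ is a cut-vertex of $G$ such that $G\setminus v$ has components $C_1,\ldots,C_k$ with $k\geq 3$. The plan is to construct a graph $H\sim G$ whose deletion $H\setminus v$ contains the net graph $N$ as an induced subgraph; since $N\notin\CC$ by Lemma~\ref{lem:they-are-obs} and $\CC$ is closed under vertex-minors (Lemma~\ref{lem:vm-closed}), this yields $H\setminus v\notin\CC$, contradicting the fact that $G$ is an obstruction and $H\sim G$.

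First I would dispose of the pendant case: if some $C_i=\{u\}$ is a singleton, then $u$ is a pendant of $G$ with unique neighbor $v$, and $v$ remains a cut-vertex of $G\setminus u$ since $k-1\geq 2$ components persist. By Lemma~\ref{lem:safe-red}, $G$ is a thread graph if and only if $G\setminus u$ is, and the latter holds because $G\setminus u\in\CC$; Theorem~\ref{theo:ganian} then gives $G\in\CC$, a contradiction. So I may assume $|C_j|\geq 2$ for every $j$, and fix any three components $C_1,C_2,C_3$. For each $j\in\{1,2,3\}$ I need to find, in a suitable $G^*\sim G$, an edge $\{a_j,b_j\}$ inside $V(C_j)$ with $a_j\in N_{G^*}(v)$ and $b_j\notin N_{G^*}(v)$. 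If $C_j\not\subseteq N_G(v)$, such an edge already exists in $G$ on the boundary between $V(C_j)\cap N_G(v)$ and $V(C_j)\setminus N_G(v)$ (both non-empty, using connectedness of $C_j$). If instead $C_j\subseteq N_G(v)$ (the difficult case), I apply a local complementation at an arbitrary $y_j\in V(C_j)$: the edges incident to $y_j$ are preserved, while the edges inside $N_G(y_j)=\{v\}\cup N_{C_j}(y_j)$ get toggled, so $v$ loses exactly the neighbors in $N_{C_j}(y_j)$ and I take $a_j:=y_j$ and any $b_j\in N_{C_j}(y_j)$ (non-empty since $|C_j|\geq 2$ and $C_j$ is connected). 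These local complementations act on disjoint vertex sets apart from $v$ and each keeps $V(C_j)$ connected, so they combine into a single graph $G^*\sim G$ carrying the three desired pairs.

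Finally I set $H:=G^**v$ and inspect the induced subgraph of $H\setminus v$ on $\{a_1,b_1,a_2,b_2,a_3,b_3\}$. Because each $a_j\in N_{G^*}(v)$ while each $b_j\notin N_{G^*}(v)$, local complementation at $v$ toggles exactly the pairs $\{a_j,a_l\}$ with $j\neq l$ among these six vertices; these were non-edges in $G^*$ (since $a_j$ and $a_l$ lie in different components of $G^*\setminus v$), so they become the triangle $a_1a_2a_3$ in $H$. Each $\{a_j,b_j\}$ is preserved because only one endpoint is in $N_{G^*}(v)$, producing a pendant $b_j$ at $a_j$; all other pairs are non-edges since they lie in distinct components and at most one endpoint is in $N_{G^*}(v)$. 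Thus the induced subgraph is precisely the net graph $N$, and the required contradiction follows. The main obstacle is the ``Case B'' scenario $C_j\subseteq N_G(v)$, where applying $*v$ directly to $G$ would toggle edges inside $V(C_j)\cap N_G(v)$ and fail to yield the pendant $b_j$; the preliminary local complementation at $y_j$ is the key device for reducing to the generic situation.
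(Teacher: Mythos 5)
Your proposal is correct and follows essentially the same route as the paper: eliminate singleton components via Lemma~\ref{lem:safe-red}, use a preliminary local complementation inside each component to arrange an induced path $v$--$a_j$--$b_j$ into three components, and then locally complement at $v$ and delete it to reveal an induced net graph $N$. The only (cosmetic) difference is how the degenerate case is repaired: the paper complements at the neighbor $x_i$ of $v$ when $v,x_i,y_i$ form a triangle, whereas you complement at a vertex of $C_j$ when $C_j\subseteq N_G(v)$; both devices serve the identical purpose.
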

\begin{proof}
	By contradiction. Assume that $G$ is an obstruction
	containing a cut-vertex $v$ such that $G\setminus v$ has three components
	$C_1,C_2$ and $C_3$.
	By Lemma~\ref{lem:obstr-are-connected}, $G$ is connected, so every component
	$C_i$ contains a neighbor $x_i$ of $v$ in $G$, for $i\in[3]$.
	
	Every component $C_i$ contains at least two vertices.
	Otherwise, a component $C_i$ with one vertex $x_i$ would be a pendant vertex
	attached to $v$, and $v$ is a cut-vertex in $G\setminus x_i$, 
	and since $G$ is an obstruction,
	$G\setminus x_i$ cannot be a thread graph by Lemma~\ref{lem:safe-red}.
   
	Consequently, every component $C_i$ of $G\setminus v$ 
	contains a neighbor $y_i$ of  $x_i$.  
	The vertices $v,x_i,y_i$ either induce a path of length two or a triangle $K_3$
	in $G$.
	If they induce a $K_3$ in $G$, then they induce a path of length two in $G*x_i$
	(see Figure~\ref{fig:cut-into-three}).
	Hence we may assume that for all $i\in[3]$, the vertices $v,x_i,y_i$ 
	induce a path of length two in $G$. 
	But then $\big(G[\{v,x_1,x_2,x_3,y_1,y_2,y_3\}]*v\big)\setminus v$ 
	 s isomorphic to the net graph $N$ (see Figure~\ref{fig:cut-into-three-2}), 
	a contradiction to $G$ being an obstruction.
\qed\end{proof}


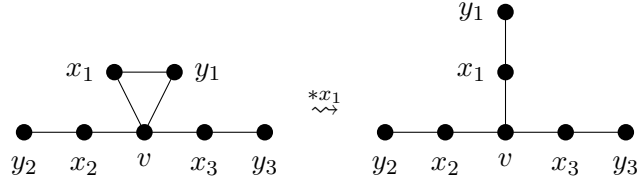
\begin{figure}[h]
\begin{center}	
\begin{tikzpicture}
	\vertexnodes 
\begin{scope}[scale=.8]
	\coordinate(v) at (2,0);
	\coordinate(x1) at (1.5,1);
	\coordinate(x2) at (1,0);
	\coordinate(x3) at (3,0);
	\coordinate(y1) at (2.5,1);
	\coordinate(y2) at (0,0);
	\coordinate(y3) at (4,0);
	\coordinate(arrow) at (5,0.5);
	\draw (x1) node[label=left:$x_1$]{} -- (y1) node[label=right:$y_1$]{} -- (v) node[label=below:$v$]{} --cycle;
	\draw (y2)node[label=below:$y_2$]{}  -- (x2) node[label=below:$x_2$]{} -- (v) --
	(x3)node[label=below:$x_3$]{}  -- (y3) node[label=below:$y_3$]{}; 
	\draw (arrow) node[fill=none]{$\stackrel{*x_1}{\leadsto}$};
\end{scope}
\begin{scope}[scale=.8, xshift=6cm]
	\coordinate(v) at (2,0);
	\coordinate(x1) at (2,1);
	\coordinate(x2) at (1,0);
	\coordinate(x3) at (3,0);
	\coordinate(y1) at (2,2);
	\coordinate(y2) at (0,0);
	\coordinate(y3) at (4,0);
	\draw (y1) node[label=left:$y_1$]{} -- (x1) node[label=left:$x_1$]{} -- (v) node[label=below:$v$]{};
	\draw (y2)node[label=below:$y_2$]{}  -- (x2) node[label=below:$x_2$]{} -- (v) --
	(x3)node[label=below:$x_3$]{}  -- (y3) node[label=below:$y_3$]{}; 
\end{scope}
\end{tikzpicture}
\end{center}
\caption{Producing an induced path of length $2$ in the proof of Lemma~\ref{lem:cut-into-three}.}
\label{fig:cut-into-three}
\end{figure}

\begin{figure}[h]
\begin{center}	
\begin{tikzpicture}
	\vertexnodes 
\begin{scope}[scale=1]
	\coordinate(v) at (1.5,0.5);
	\coordinate(x1) at (1.5,1);
	\coordinate(x2) at (0.8,0);
	\coordinate(x3) at (2.2,0);
	\coordinate(y1) at (1.5,1.8);
	\coordinate(y2) at (0,0);
	\coordinate(y3) at (3,0);
	\draw (x1) node[label=left:$x_1$]{} -- (x2) node[label=below:$x_2$]{} -- (x3) node[label=below:$x_3$]{} --cycle;
	\draw (y2) node[label=below:$y_2$]{}  -- (x2) -- (x3) -- (y3) node[label=below:$y_3$]{};
	\draw (y1) node[label=left:$y_1$]{}  -- (x1);
	\draw (v) node[label=below:$v$]{}  -- (x1);
	\draw (v)  -- (x2);
	\draw (v)  -- (x3);
\end{scope}
\end{tikzpicture}
\end{center}
\caption{The graph $G[\{v,x_1,x_2,x_3,y_1,y_2,y_3\}]*v$ in the proof of Lemma~\ref{lem:cut-into-three}.}
\label{fig:cut-into-three-2}
\end{figure}
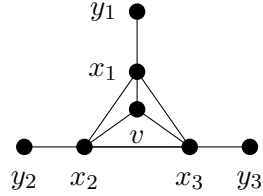

\medskip
We will investigate the structure of obstructions by considering their \emph{block-cut-vertex trees}.
We say that a
graph is \emph{non-separable}, if it is non-trivial, 
connected and contains no cut-vertices. The non-separable 
graphs are
$K_2$ and all $2$-connected graphs.
A \emph{block} in $G$ is a subgraph of $G$ that is non-separable and 
maximal with respect to this property.
Given a connected graph $G$, we define a bipartite graph $\BB(G)$ with vertex set $V(\BB(G))=X\dotcup Y$,
where $X$ is the set of all cut-vertices of $G$ and $Y$ is the
set of all blocks in $G$ and there is an edge from $x\in X$ to $y\in Y$
if and only if $x\in y$. It is well-known that 
$\BB(G)$ is a tree~\cite{harary69}. The tree $\BB(G)$ is called the 
\emph{block-cut-vertex tree} of $G$.
By Lemma~\ref{lem:obstr-are-connected} every obstruction $G$ for $\CC$ is connected, 
so $G$ has a block-cut-vertex tree.

\begin{lem}\label{lem:blockcut-into-three}
	Let $G$ be an obstruction and let $B$ be a block in $G$.
	If $\deg_{\BB(G)}(B)\geq 3$, then $G$ is isomorphic to $N$.
\end{lem}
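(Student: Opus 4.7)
The plan is to establish in a sequence of steps that $G$ has exactly six vertices forming the net $N$. Because $\deg_{\BB(G)}(B) \geq 3$, there are at least three cut-vertices $c_1, c_2, c_3$ of $G$ in $V(B)$. By Lemma~\ref{lem:cut-into-three}, every cut-vertex of $G$ lies in exactly two blocks of $\BB(G)$, so each $c_i$ lies in a unique block $B_i \neq B$; let $D_i$ denote the connected component of $G \setminus c_i$ disjoint from $V(B) \setminus \{c_i\}$. The key first step is to show that each $D_i$ is a single pendant vertex $y_i$ adjacent only to $c_i$. Assuming for contradiction that $|D_i| \geq 2$, the subtree of $\BB(G)$ in the branch from $c_i$ away from $B$ has a leaf block $B'$, of which at most one vertex is a cut-vertex of $G$, so there is a non-cut-vertex $x \in V(B') \cap D_i$. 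Because $|D_i| \geq 2$, the set $D_i \setminus \{x\}$ is non-empty, so $c_i$ is still a cut-vertex of $G \setminus x$; since $x \notin V(B)$, the block $B$ survives in $G \setminus x$ with $\deg_{\BB(G \setminus x)}(B) \geq 3$ (as each $c_j$ remains a cut-vertex). But $G \setminus x$ is a thread graph by the obstruction property and Theorem~\ref{theo:ganian}. By Remark~\ref{rem:thread-basics}.1 every cut-vertex of $G \setminus x$ lies on every thread, so $c_1, c_2, c_3$ all do; on the other hand, Remarks~\ref{rem:thread-basics}.2 and~\ref{rem:thread-basics}.3 force the intersection of a thread with the 2-connected block $B$ to be a single edge, containing at most two of the $c_i$'s---a contradiction. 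Hence $D_i = \{y_i\}$ with $y_i$ pendant at $c_i$, and the same argument applied to any other cut-vertex of $G$ in $V(B)$ shows that every such cut-vertex has only a single pendant branching off.

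Next, I show $c_1, c_2, c_3$ are pairwise adjacent, so they form a triangle in $B$. Fix distinct $i, j \in \{1,2,3\}$ and let $\ell$ denote the third index. The induced subgraph $G' := G[V(B) \cup \{y_i, y_j\}]$ is a vertex-minor of the thread graph $G \setminus y_\ell \in \CC$, and so $G'$ is itself a thread graph by Lemma~\ref{lem:vm-closed}.4 and Theorem~\ref{theo:ganian}. Since $G'$ consists of the 2-connected block $B$ together with the two pendants $y_i, y_j$ at distinct vertices $c_i, c_j$, and $V(G') = V(B) \cup \{y_i, y_j\}$, Lemma~\ref{lem:pending-vertices-help} gives $\{c_i, c_j\} \in E(B)$. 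Applying this to all three pairs $\{i,j\}$ produces the triangle.

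Finally, the induced subgraph $G[\{c_1, c_2, c_3, y_1, y_2, y_3\}]$ is exactly the net $N$: the triangle on $c_1, c_2, c_3$ with a pendant $y_i$ attached at each $c_i$. Hence $N$ is an induced subgraph---and thus a vertex-minor---of $G$. If there were a vertex $w \in V(G) \setminus \{c_1, c_2, c_3, y_1, y_2, y_3\}$ (an additional cut-vertex of $G$ in $V(B)$ with its pendant, or any non-cut-vertex of $B$ other than $c_1, c_2, c_3$), then $N$ would still be an induced subgraph of $G \setminus w$. By Lemma~\ref{lem:vm-closed}.4 this would force $N \in \CC$, contradicting Lemma~\ref{lem:they-are-obs}. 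Hence $V(G) = \{c_1, c_2, c_3, y_1, y_2, y_3\}$ and $G \cong N$. The main obstacle is the first step---reducing each branch of $B$ to a single pendant---where the fact that $B$'s high block-tree degree persists after a single vertex deletion from $D_i$ is crucial; the subsequent steps are routine bookkeeping using Lemma~\ref{lem:pending-vertices-help} and the fact that $N$ is itself a forbidden vertex-minor.
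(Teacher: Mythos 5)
Your proof is correct, and its decisive steps coincide with the paper's: you apply Lemma~\ref{lem:pending-vertices-help} to the induced subgraphs $G[V(B)\cup\{y_i,y_j\}]$ (thread graphs, being induced subgraphs of $G\setminus y_\ell$) to force the triangle $c_1c_2c_3$, and then use minimality of the obstruction to conclude $V(G)$ has exactly the six vertices of an induced copy of $N$. The one genuine difference is your first step, which the paper shows is unnecessary: rather than proving that each branch $D_i$ hanging off $c_i$ is a single pendant vertex, the paper simply picks an arbitrary neighbor $b_i$ of $c_i$ outside $V(B)$ and runs the same Lemma~\ref{lem:pending-vertices-help} argument on $G[V(B)\cup\{b_j,b_k\}]$ -- the pendancy of $b_j,b_k$ in that induced subgraph is automatic from maximality of the block $B$, so no control over the branches is needed, and the final minimality argument kills all extra vertices (branch vertices included) in one stroke. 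Your step~1 is nevertheless sound; its one informal point -- that the thread of $G\setminus x$ meets the $2$-connected block $B$ in a single edge, so $B$ cannot hold three cut-vertices -- is invoked at the same level of rigor as the paper's own use of Remarks~\ref{rem:thread-basics}.2 and~\ref{rem:thread-basics}.3 inside Lemma~\ref{lem:pending-vertices-help}, and it can be made precise (the middle cut-vertex on the thread would separate the other two, contradicting $2$-connectivity of $B$). Your closing paragraph also makes explicit the ``no seventh vertex'' argument that the paper leaves implicit in its final sentence.
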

\begin{proof}
	Let $\deg_{\BB(G)}(B)\geq 3$, and let $a_1,a_2,a_3$ be three neighbors of
	$B$ in the block-cut-vertex graph $\BB(G)$. Then $a_1,a_2,a_3$ are cut-vertices
	in $G$. 

	The vertices $a_1,a_2,a_3$ are pairwise distinct:
	Otherwise, if, $a_i=a_j$ for some $i,j\in[3]$ with $i\neq j$, 
	then $G\setminus a_i$ has at least three connected components, and by Lemma~\ref{lem:cut-into-three},
	$G$ is not an obstruction.
	Since $\{a_1,a_2,a_3\}\subseteq V(B)$ we have $\left|V(B)\right|\geq 3$ and 
	hence $B$ is $2$-connected. Since $a_i$ is a cut-vertex, for every $i\in[3]$
	there exists a neighbor $b_i$ of $a_i$, $b_i\in V(G)\setminus V(B)$.
	The graph $G\setminus b_i$ is a thread graph, and by 
	Lemma~\ref{lem:pending-vertices-help}, the edge $\{a_j,a_k\}$ is a thread in
	$G[V(B)\cup\{b_j,b_k\}]$ for $\{j,k\}=[3]\setminus\{i\}$. In particular,
	$\{a_j,a_k\}\in E(G)$ for all pairs $j,k\in[3]$ with $j\neq k$. 
	Hence $V(G)=\{a_1,a_2,a_3,b_1,b_2,b_3\}$ and $G$ is isomorphic to $N$.
\qed\end{proof}

\begin{lem}\label{lem:cut-into-two}
	Let $G$ be an obstruction and let $v$ be a cut-vertex in $G$.
	Then one of the connected components of $G\setminus v$ is trivial.
\end{lem}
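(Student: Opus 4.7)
The plan is to argue by contradiction. Suppose both components $C_1, C_2$ of $G\setminus v$ (exactly two, by Lemma~\ref{lem:cut-into-three}) have at least two vertices; I will deduce that $G$ itself is a thread graph, contradicting $G\notin\CC$. Since every cut-vertex of $N$ separates off a trivial component, the contradiction hypothesis forces $G\neq N$, so combining Lemma~\ref{lem:blockcut-into-three} with Lemma~\ref{lem:cut-into-three} yields that $\BB(G)$ is a path.

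Because $G$ is an obstruction, $G\setminus v\in\CC$, so $C_1$ and $C_2$ are thread graphs, and similarly each induced subgraph $G_i := G[V(C_i)\cup\{v\}]$ is a thread graph, being a vertex-minor of $G$ by Lemma~\ref{lem:vm-closed}.4. Since $G_i\setminus v = C_i$ is connected, $v$ is not a cut-vertex of $G_i$; hence by Remark~\ref{rem:thread-basics}.1 it lies in a unique block $B_v^i$ of $G_i$. The path structure of $\BB(G)$ then forces $B_v^i$ to equal $G_i$ or to be a leaf block of $\BB(G_i)$, so the thread of $G_i$ terminates inside $B_v^i$.

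The key step is to construct thread orderings $\bar u_1$ of $G_1$ (ending at $v$) and $\bar u_2$ of $G_2$ (starting at $v$) such that the concatenation $\bar u_1 \bar u_2$, with the two copies of $v$ identified and labeled $\{L,R\}$, is a valid thread ordering of $G$. This concatenation is legitimate because the cut-vertex property forbids any edge of $G$ between $V(C_1)$ and $V(C_2)$, so no spurious edges arise. When $B_v^i$ is a $K_2$, placing $v$ at the appropriate endpoint of the thread of $G_i$ is automatic; when $B_v^i$ is $2$-connected with at least three vertices, one re-chooses the thread edge of $B_v^i$ to contain $v$, using Lemma~\ref{lem:reorder-vertices} to permute within $\LL$-constant intervals as needed. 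This exhibits $G$ as a thread graph, yielding the desired contradiction.

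The main obstacle is the $2$-connected case above: establishing that $v$ can always be placed at a thread endpoint of $B_v^i$ when $B_v^i$ is a $2$-connected thread block on at least three vertices. This requires a structural analysis of which edges of a $2$-connected thread block may serve as its thread edge, and of how the resulting labelings can be adjusted so that $v$'s neighbors in $B_v^i$ are positioned correctly.
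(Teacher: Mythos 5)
Your overall strategy---assume both components $C_1,C_2$ of $G\setminus v$ are nontrivial, produce threads of $G_i:=G[V(C_i)\cup\{v\}]$ ending at $v$, and concatenate them at $v$ to exhibit $G$ as a thread graph---is the same as the paper's. But the step you yourself flag as ``the main obstacle'' is not a deferred technicality: the statement you would need is false. Knowing only that $G_i$ is a thread graph and that $v$ is a non-cut-vertex lying in an end-block of $\BB(G_i)$ does \emph{not} guarantee a thread ordering of $G_i$ with $v$ as an endpoint. Take $G_i$ to be a $4$-cycle $p,q,r,s$ with a pendant vertex $t$ attached at $p$, and $v=r$: this is a thread graph, and $r$ is a non-cut-vertex in the end-block on $\{p,q,r,s\}$, yet every thread must contain the cut-vertex $p$ and can only terminate, inside that block, at $p$ or at a neighbour of $p$ (an internal thread vertex must be a cut-vertex by Remark~\ref{rem:thread-basics}.1, and $q,s$ are not cut-vertices), so no thread ends at $r$. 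Hence no ``structural analysis of which edges of a $2$-connected thread block may serve as its thread edge'' can close the gap from the hypotheses you are actually using.

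The missing idea is to invoke the obstruction property of $G$ once more, on a vertex of the \emph{opposite} component. Since $\left|V(C_{3-i})\right|\geq 2$, pick $y\in V(C_{3-i})$ distinct from the neighbour $x_{3-i}$ of $v$ in $C_{3-i}$. Then $G\setminus y$ is a thread graph in which $v$ is \emph{still} a cut-vertex, so $v$ lies on its thread by Remark~\ref{rem:thread-basics}.1; restricting the thread-block sequence to the side containing $C_i$ (Remark~\ref{rem:thread-basics}.3) yields a thread of $G_i$ containing $v$, and $v$ must be an endpoint of that thread precisely because it is not a cut-vertex of $G_i$. This is the extra leverage your argument lacks, and it is what rules out configurations like the example above. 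A smaller slip: you justify $G_i\in\CC$ by calling it a vertex-minor of $G$, but $G\notin\CC$, so that yields nothing; the correct reason is that $G_i$ is an induced subgraph of $G\setminus y$ for any $y\in V(C_{3-i})$, and $G\setminus y\in\CC$ because $G$ is an obstruction.
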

\begin{proof}
	By Lemma~\ref{lem:cut-into-three}, $G\setminus v$ has exactly two connected
	components.
	Towards a contradiction, assume
	that both components $C_1$ and $C_2$ of $G\setminus v$
	contain at least two vertices. Since $G$ is connected, there exist
	vertices $x_i\in V(C_i)$ that are neighbors of $v$ in $G$, and
	let $y_i\in V(C_i)\setminus\{x_i\}$ (for $i\in[2]$). Then $G\setminus y_i$
	is a thread graph with a thread $P_i$. Since $v$ is a cut-vertex in $G\setminus y_i$,
	$v$ lies on $P_i$ by Remark~\ref{rem:thread-basics}.1. By 
	Remark~\ref{rem:thread-basics}.3, $P_i':=P_i\cap G[V(C_i)\cup\{v\}]$
	is a thread in $G[V(C_i)\cup\{v\}]$ and $v\in V(P_i')$. But since
	$C_i$ is connected, $v$ is not a cut-vertex in $G[V(C_i)\cup\{v\}]$,
	hence $v$ is an end-vertex of $P_i'$ by Remark~\ref{rem:thread-basics}.1.
	But $V(P_1')\cap V(P_2')=\{v\}$, and we find that $V(P_1')\cap V(P_2')$
	is a thread for $G$, which contradicts $G$ being an obstruction.
	\qed\end{proof}

\begin{theo}[Classification]\label{theo:classify-obstrucitons}
	Let $G$ be an obstruction for $\CC$. Then either 
	\begin{enumerate}
		\item $G$ is isomorphic to $N$, or
		\item $G$ is $2$-connected.
		\item $G$ has a $2$-connected subgraph $G_0\subseteq G$  and a vertex
			$u\in V(G)$ such that $V(G)= V(G_0)\cup\{u\}$ and $u$ is a pendant vertices in $G$, or	
		\item $G$ contains a $2$-connected subgraph $G_0\subseteq G$ and two vertices
			$u,v\in V(G)$ such that $V(G)= V(G_0)\cup\{u,v\}$ and $u$ and $v$ are pendant vertices in $G$
			with $N_G(u)\cap N_G(v)=\emptyset$, or
	\end{enumerate}
\end{theo}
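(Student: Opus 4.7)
The plan is to analyze the block-cut-vertex tree $\BB(G)$, which exists because $G$ is connected by Lemma~\ref{lem:obstr-are-connected}. First, if $G$ has no cut-vertex then $G$ is non-separable. Since $G$ is an obstruction it is not a thread graph, which rules out $G \in \{K_1, K_2\}$, so $|V(G)| \geq 3$ and $G$ is $2$-connected, giving case~(2). Otherwise $G$ has a cut-vertex. For any cut-vertex $v$, the number of connected components of $G \setminus v$ equals $\deg_{\BB(G)}(v)$ (each block containing $v$ contributes one component, because $B \setminus v$ is connected whenever $B$ is non-separable). Lemma~\ref{lem:cut-into-three} then forces every cut-vertex to have degree exactly $2$ in $\BB(G)$, and Lemma~\ref{lem:blockcut-into-three} lets me assume $G \not\cong N$ (otherwise case~(1) holds), so every block has degree at most $2$ in $\BB(G)$. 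Hence $\BB(G)$ is a path
\[\BB(G)\colon B_1 - c_1 - B_2 - c_2 - \cdots - c_{k-1} - B_k, \qquad k \geq 2.\]

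Next, I apply Lemma~\ref{lem:cut-into-two} to the outermost cut-vertex $c_1$: one of the two components of $G \setminus c_1$, namely $V(B_1)\setminus\{c_1\}$ and $\bigl(V(B_2)\cup\cdots\cup V(B_k)\bigr)\setminus\{c_1\}$, is trivial. If $k \geq 3$, then the latter has at least two vertices (containing $c_2$ together with some vertex of $V(B_3)\setminus\{c_2\}$), so $V(B_1)\setminus\{c_1\} = \{u\}$, forcing $B_1 = K_2$ with $u$ a pendant; symmetrically $B_k = K_2$ with a pendant $v$. To rule out $k \geq 4$, I apply Lemma~\ref{lem:cut-into-two} to the interior cut-vertex $c_2$: its left component $\{u\} \cup (V(B_2)\setminus\{c_2\})$ already has size at least $2$ (containing $u$ and $c_1$), but if $k \geq 4$ then the right component $\bigl(V(B_3)\cup\cdots\cup V(B_k)\bigr)\setminus\{c_2\}$ also has size at least $2$ (containing $c_3$ together with some vertex of $V(B_4)\setminus\{c_3\}$), a contradiction. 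Hence $k \in \{2, 3\}$.

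Finally, I dispatch these two subcases. For $k = 2$, Lemma~\ref{lem:cut-into-two} applied at $c_1$ forces $B_1 = K_2$ or $B_2 = K_2$; without loss of generality $B_1 = K_2$ with pendant $u$. If also $B_2 = K_2$ then $G$ is a path on three vertices, a thread graph, contradicting that $G$ is an obstruction; so $B_2$ is $2$-connected, and case~(3) holds with $G_0 = B_2$. For $k = 3$, both $B_1$ and $B_3$ are $K_2$'s with pendants $u$ and $v$ attached at distinct cut-vertices $c_1, c_2$ of $B_2$; if $B_2 = K_2$ then $G$ is a path on four vertices, again a thread graph; so $B_2$ is $2$-connected, and case~(4) holds with $G_0 = B_2$, since $N_G(u) = \{c_1\}$ and $N_G(v) = \{c_2\}$ are disjoint. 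The main obstacle is the bound $k \leq 3$: it relies on a careful double application of Lemma~\ref{lem:cut-into-two} at an interior cut-vertex, exploiting that the leaf $K_2$ blocks already contribute a pendant to one side so that the opposite side is forced to be trivial.
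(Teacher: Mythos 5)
Your proof is correct and takes essentially the same route as the paper's: both reduce to the block--cut-vertex tree, use Lemmas~\ref{lem:cut-into-three} and~\ref{lem:blockcut-into-three} to show $\BB(G)$ is a path, and then use Lemma~\ref{lem:cut-into-two} to bound its length and force the end blocks to be $K_2$'s. The only (harmless) differences are that you spell out the exclusion of four or more blocks by applying Lemma~\ref{lem:cut-into-two} at an interior cut-vertex, and you justify $2$-connectedness of the central block by ruling out $K_2$ via small thread graphs rather than via Remark~\ref{rem:four-verts}.
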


\begin{proof} Lemma~\ref{lem:they-are-obs} proves Case~1.
	Let $G$ be an obstruction for $\CC$ and assume that $G$ is not isomorphic to $N$. 
	Then, by Lemmas~\ref{lem:cut-into-three} and~\ref{lem:blockcut-into-three},  
	the block-cut-vertex tree $\BB(G)$ is a path. 
	By Remark~\ref{rem:four-verts}, $G$ has more than four vertices.
	By Lemma~\ref{lem:cut-into-two}, the path $\BB(G)$ has length $0$, $2$ or $4$. 
	If $\BB(G)$ has length $0$, then $G$ is a block and hence $G$ is two-connected,
	which is Case~2.

	If $\BB(G)$ has length $2$, then $\BB(G)$ has two blocks, and by Lemma~\ref{lem:cut-into-two}, one block
	of $\BB(G)$ is isomorphic to $K_2$. Then the other block has at least four vertices, and hence is
	$2$-connected. This is Case~3.
	
	If $\BB(G)$ has length $4$, then $\BB(G)$ has three blocks. 
	By Lemma~\ref{lem:cut-into-two}, the two blocks $y_1$ and $y_2$ of degree one in $\BB(G)$ are
	both isomorphic to $K_2$. Hence the third block $y_3$ contains at least three vertices,
	and therefore it is $2$-connected. By Lemma~\ref{lem:cut-into-three}, the intersection
	$V(y_1)\cap V(y_2)=\emptyset$ is empty.
	This is Case~4.
\qed\end{proof}

\begin{lem}[Two `whiskers' in obstructions]\label{lem:wiskers}
	Let $H$ be an obstruction for $\CC$ such that $H\not\sim C_5$ and $H\not\sim N$.
	Then $H$ is locally equivalent to an obstruction $G$ 
	that contains a $2$-connected subgraph $G_0\subseteq G$ and two vertices
	$u,v\in V(G)$ such that $V(G)= V(G_0)\cup\{u,v\}$ and $u$ and $v$ are 
	pendant vertices in $G$
	with $N_G(u)\cap N_G(v)=\emptyset$ (i.e.\ Case~4 in 
	Theorem~\ref{theo:classify-obstrucitons}).
\end{lem}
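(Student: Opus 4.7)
The plan is to leverage Theorem~\ref{theo:classify-obstrucitons}: since $H$ is an obstruction and $H \not\sim N$, the graph $H$ itself satisfies one of Cases~2, 3, or 4 of that classification. If $H$ satisfies Case~4 we take $G := H$; otherwise a locally equivalent $G$ in Case~4 will be produced by a sequence of local complementations.

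The core reduction idea is the following: if an obstruction $G \sim H$ can be exhibited with two pendant vertices attached to distinct vertices, then Theorem~\ref{theo:classify-obstrucitons} applied to $G$ forces $G$ to be in Case~4, since the presence of two pendants with distinct unique neighbors rules out Cases~1--3. It therefore suffices to engineer such a $G$.

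For Case~3 ($H = H_0 \cup \{u\}$ with $u$ a pendant at $a \in V(H_0)$ and $H_0$ 2-connected): by Remark~\ref{rem:four-verts}, $|V(H)| \geq 5$ and hence $|V(H_0)| \geq 4$. Since $H_0 = H \setminus u$ is a thread graph it is distance-hereditary, so Fact~\ref{fact:dist-hereditary} applies; as the 2-connected $H_0$ has no pendants, it must contain at least two disjoint split pairs. Choose one, $\{p_1, p_2\}$, disjoint from $\{a\}$. A direct calculation from the definitions shows that a strong sibling pair becomes a pendant and its neighbor after one local complementation at $p_1$, while a weak sibling pair can first be converted to a strong sibling pair by a local complementation at a common neighbor $w \neq a$ (such a $w$ exists because $p_1$ has at least two neighbors in the 2-connected $H_0$). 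Every complementation is performed at a vertex $\neq a$, so the neighborhood $N(u) = \{a\}$ is preserved and $u$ remains a pendant. Since $p_1 \neq a$, the resulting $G$ has two pendants with disjoint neighborhoods, putting $G$ in Case~4 by the core observation.

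For Case~2 ($H$ 2-connected): the plan is to first produce a single pendant by local complementation, reducing to Case~3, then apply the argument above. Picking $v \in V(H)$, the graph $H \setminus v$ is a thread graph on $\geq 4$ vertices, hence distance-hereditary, and by Fact~\ref{fact:dist-hereditary} contains a split pair or a pendant. The remaining task is to convert such a structure into a genuine pendant of a locally equivalent graph, accounting for $v$'s extra adjacencies in $H$. The various subcases (depending on whether $v$ is adjacent to both, one, or neither element of the chosen split pair) will each be handled by a short explicit sequence of local complementations, possibly preceded by a complementation at $v$ to normalize its neighborhood. This Case~2 analysis is expected to be the main obstacle; once a pendant is created, the Case~3 argument finishes the job.
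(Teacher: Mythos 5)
Your proposal has a genuine gap: Case~2 (the $2$-connected case) is never actually proved. You correctly identify it as ``the main obstacle'' and then leave it as a promissory note (``will each be handled by a short explicit sequence of local complementations''), but this is exactly the hard part. A split pair or pendant vertex of $H\setminus v$ need not be a split pair or pendant of $H$, and the local complementations you would use to repair this (at $v$, at common neighbors that may or may not see $v$, etc.) interact with $v$'s adjacencies in ways that are not obviously controllable; nothing in the proposal shows that this subcase analysis terminates in a pendant vertex. The telltale sign is that your argument never uses the hypothesis $H\not\sim C_5$. That hypothesis is the key: since $H$ is an obstruction, $C_5\vm H$ would force $C_5\sim H$ (otherwise $C_5$ would be a vertex-minor of some $H\setminus v\in\CC$), so $H\not\sim C_5$ gives $C_5\not\vm H$, hence $H$ is distance-hereditary by Fact~\ref{fact:seefive}. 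The paper then applies Fact~\ref{fact:dist-hereditary} \emph{directly to $H$} (not to $H\setminus v$ or to a block of $H$), obtaining two disjoint split-pair-or-pendant structures in $H$ itself, converts each split pair into a pendant by local complementation exactly as you do, and lets Theorem~\ref{theo:classify-obstrucitons} force Case~4. This handles the $2$-connected case and your Case~3 uniformly, with no case distinction on the block structure at all.

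On the positive side, your ``core observation'' (two pendants with distinct neighbors in an obstruction $G\not\cong N$ force Case~4 of the classification) is sound and is also how the paper concludes, and your Case~3 argument is correct and in some respects more careful than the paper's (you explicitly choose the split pair disjoint from $a$ and check that the complementations are performed away from $a$ so that $u$ stays pendant). But as written the proof is incomplete until Case~2 is closed, and the intended fix is to import distance-heredity of $H$ from the hypothesis $H\not\sim C_5$ rather than to fight the $2$-connected case by hand.
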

\begin{proof}
	Let $H$ be an obstruction for $\CC$ such that $H\not\sim C_5$ and $H\not\sim N$.
	Then $C_5$ is not a vertex-minor of $H$, and $H$ is distance-hereditary by Fact~\ref{fact:seefive}.
	Since every graph on four vertices is a thread graph, we know that $\left|V(H)\right|>4$.
  	Therefore we can apply Fact~\ref{fact:dist-hereditary} to find that 
	$G$ has either at least two disjoint split pairs, or a split pair and a pendant vertex,
	or at least two pendant vertices.

	We show that any split pair $u,v\in V(H)$ can be transformed into a pendant vertex by applying
	local complementations.
	If $u$ and $v$ are strong siblings, then a local complementation at $u$ 
	results in $v$ being a pendant vertex.
	If $u$ and $v$ are weak siblings, then, since $H$ is connected, there is a 
	vertex $w\in N_H(u)\cap N_H(v)$. It is easy to see that a local complementation 
	at $w$ transforms $u$ and $v$ into strong siblings in the locally equivalent graph.

	Hence $H$ is locally equivalent to a graph $G$ with two pendant vertices.
	Since $H$ is an obstruction, by definition, $G$ is an obstruction as well,
	so $G$ must be of the form of Case~4 in Theorem~\ref{theo:classify-obstrucitons}.
\qed\end{proof}

\medskip
We are now ready to prove the main result of this paper.

\medskip\emph{Proof of Theorem~\ref{theo:main}.}
	By Lemma~\ref{lem:they-are-obs}, $C_5$, $N$ and $Q$ are obstructions
	for $\CC$.
	We show that if $G$ is an obstruction for $\CC$ such that 
	$G\not\sim C_5$ and $G\not\sim N$, then $G\sim Q$. 
	By Lemma~\ref{lem:wiskers} we may assume that
	$G$ contains a $2$-connected subgraph $G_0\subseteq G$ and two vertices
	$a',b'\in V(G)$ such that $V(G)= V(G_0)\cup\{a',b'\}$ and $a'$ and $b'$ are pendant vertices in $G$
	with $N_G(a')\cap N_G(b')=\emptyset$. Let $a\in V(G)$ 
	be the unique neighbor of $a'$ and let $b\in V(G)$ 
	be the unique neighbor of $b'$. 

	Since $G$ is an obstruction, the graph $G\setminus b'$ is a thread graph. 
	Since $a$ is a cut-vertex in $G\setminus b'$,
	vertex $a$ lies on every thread of $G\setminus b'$ (by Remark~\ref{rem:thread-basics}.1). 
	Using Remark~\ref{rem:thread-basics} and the fact that $G_0$ is $2$-connected,
	it is easy to see that there is an edge $\{a,c\}\in E(G\setminus b')$,
	an ordering $\bar v=v_1,\ldots,v_n$ of $V(G\setminus b')$ with $a=v_1$ and 
	$c=v_n$, and a labeling $\LL$ such that 
	$(G\setminus b',(a,c),\bar v,\LL)$ is a thread block. Moreover, $b\neq c$, 
	because otherwise $G$ would be a thread graph. (To see this, place $b'$ between $v_{n-1}$ and $v_n$
	and let $\LL(b'):=\{R\}$.)
	Let us consider different cases of the label $\LL(b)$ in $\bar v$.
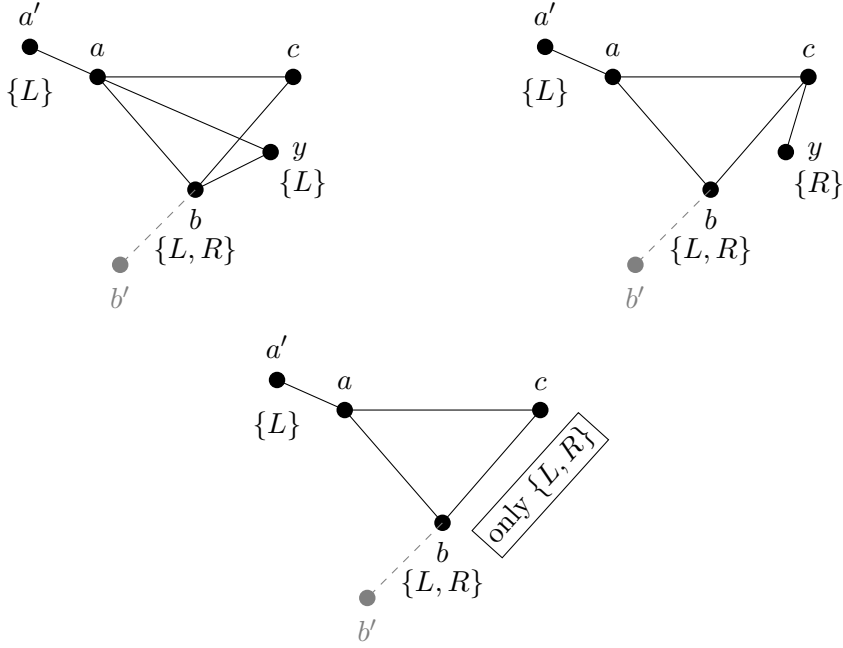
\begin{figure}[h]
\begin{center}	
\begin{tikzpicture}
	\vertexnodes 
\begin{scope}[scale=1]
	\coordinate(aprime) at (0.8,2.9);
	\coordinate(a) at (1.7,2.5);
	\coordinate(b) at (3,1);
	\coordinate(bprime) at (2,0);
	\coordinate(c) at (4.3,2.5);
	\coordinate(x) at (2,2);
	\coordinate(y) at (4,1.5);
	\draw  (aprime) node[label=above:$a'$, label=below:$\{L\}$]{} -- (a) node[label=above:$a$]{} -- (c) node[label=above:$c$]{}; 
	\draw (a) --(b) node[label=below:$b$, label={[name=label node]below:$\{L,R\}$}]{}; 
	\draw[dashed, color=gray] (b) -- (bprime) node[fill=gray,circle, inner sep=2.2pt, label=below:$b'$]{}; 
	\draw (b) -- (c);
	\draw (a) -- (y) node[label=right:$y$, label=below right:$\{L\}$]{} -- (b);
\end{scope}
\end{tikzpicture}
\hspace{2cm}
\begin{tikzpicture}
	\vertexnodes 
\begin{scope}[scale=1]
	\coordinate(aprime) at (0.8,2.9);
	\coordinate(a) at (1.7,2.5);
	\coordinate(b) at (3,1);
	\coordinate(bprime) at (2,0);
	\coordinate(c) at (4.3,2.5);
	\coordinate(x) at (2,2);
	\coordinate(y) at (4,1.5);
	\draw  (aprime) node[label=above:$a'$, label=below:$\{L\}$]{} -- (a) node[label=above:$a$]{} -- (c) node[label=above:$c$]{}; 
	\draw (a) --(b) node[label=below:$b$, label={[name=label node]below:$\{L,R\}$}]{}; 
	\draw[dashed, color=gray] (b) -- (bprime) node[fill=gray,circle, inner sep=2.2pt, label=below:$b'$]{}; 
	\draw (b) -- (c);
	\draw (c) -- (y) node[label=right:$y$, label=below right:$\{R\}$]{};
\end{scope}

\end{tikzpicture}
\begin{tikzpicture}
	\vertexnodes 
\begin{scope}[scale=1]
	\coordinate(aprime) at (0.8,2.9);
	\coordinate(a) at (1.7,2.5);
	\coordinate(b) at (3,1);
	\coordinate(bprime) at (2,0);
	\coordinate(c) at (4.3,2.5);
	\coordinate(x) at (2,2);
	\coordinate(y) at (4.3,1.5);
	\draw  (aprime) node[label=above:$a'$, label=below:$\{L\}$]{} -- (a) node[label=above:$a$]{} -- (c) node[label=above:$c$]{}; 
	\draw (a) --(b) node[label=below:$b$, label={[name=label node]below:$\{L,R\}$}]{}; 
	\draw[dashed, color=gray] (b) -- (bprime) node[fill=gray,circle, inner sep=2.2pt, label=below:$b'$]{}; 
	\draw (b) -- (c);
\draw (y) node[style=draw,shape=rectangle,rotate=48,fill=none]{only $\{L,R\}$};
\end{scope}

\end{tikzpicture}
\end{center}
\caption{Cases 1.1, 1.2, and 1.3 of the proof of Theorem~\ref{theo:main}}\label{fig:cases-one}
\end{figure}

	\emph{Case 1.} $\LL(b)=\{LR\}$ (see Figure~\ref{fig:cases-one}).

	\emph{Case 1.1.} There is a vertex $y\in V(G\setminus b')\setminus\{a,a',b,c\}$
	that comes after $b$ in $\bar v$ and $\LL(y)=\{L\}$.
	Then $G*c$ is isomorphic to $Q$.
	
	\emph{Case 1.2.} 
	Assume that
	there is a vertex $y\in V(G\setminus b')\setminus\{a,a',b,c\}$
	that comes after $b$ in $\bar v$ with 
	$\LL(y)=\{R\}$.
	Then the set $\{a,a',b,b',c,y\}$ induces a graph isomorphic to $N$ in $G$,
	a contradiction to the assumptions.

	\emph{Case 1.3.} 
	Every vertex $y\in V(G\setminus b')\setminus\{a,a',b,c\}$
	that comes after $b$ in $\bar v$ has label $\LL(y)=\{L,R\}$
	Then the interval $b,\ldots,c$ of $\bar v$ is $\LL$-constant,
	so by Lemma~\ref{lem:reorder-vertices} we can exchange the positions of $b$ and $c$ in $\bar v$,
	obtaining a new ordering $\bar w$ of $V(G\setminus b')$ witnessing that
	$(G\setminus b',(a,b),\bar w,\LL)$ is a thread block. But then $G$ is a thread graph:
	To see this, place $b'$ between $w_{n-1}$ and $w_n$
	and let $\LL(b'):=\{R\}$. 
	
	This contradicts $G$ being an obstruction.

\begin{figure}[h]
\begin{center}	
\begin{tikzpicture}
	\vertexnodes 
\begin{scope}[scale=1]
	\coordinate(aprime) at (0.8,2.9);
	\coordinate(a) at (1.7,2.5);
	\coordinate(b) at (3,1);
	\coordinate(bprime) at (2,0);
	\coordinate(c) at (4.3,2.5);
	\coordinate(x) at (2,1.5);
	\coordinate(y) at (4,1.5);
	\draw  (aprime) node[label=above:$a'$, label=below:$\{L\}$]{} -- (a) node[label=above:$a$]{} -- (c) node[label=above:$c$]{}; 
	\draw (a) --(b) node[label=above:$b$, label=below:$\{L\}$]{}; 
	\draw[dashed, color=gray] (b) -- (bprime) node[fill=gray,circle, inner sep=2.2pt, label=below:$b'$]{}; 
	\draw (b) -- (x) node[label=left:$x$, label={[name=label node]below left:$\{(L,)R\}$}]{} -- (c);
	\draw (c) -- (y) node[label=right:$y$, label=below right:$\{R\}$]{};
	\draw[xshift=-2cm,snake=bumps] (x) -- (a);
\end{scope}
\end{tikzpicture}
\hspace{2cm}
\begin{tikzpicture}
	\vertexnodes 
\begin{scope}[scale=1]
	\coordinate(aprime) at (0.8,2.9);
	\coordinate(a) at (1.7,2.5);
	\coordinate(b) at (3,1);
	\coordinate(bprime) at (2,0);
	\coordinate(c) at (4.3,2.5);
	\coordinate(x) at (2,1.5);
	\coordinate(y) at (4,1.5);
	\draw  (aprime) node[label=above:$a'$, label=below:$\{L\}$]{} -- (a) 
	node[label=above:$a$]{} -- (c) node[label=above:$c$]{}; 
	\draw (a) --(b) node[label=above:$b$, label=below:$\{L\}$]{}; 
	\draw[dashed, color=gray] (b) -- (bprime) node[fill=gray,circle, inner sep=2.2pt, label=below:$b'$]{}; 
	\draw (y) -- (a);
	\draw (y) -- (x);
	\draw (c) -- (y) node[label=right:$y$, label={[name=label node]below right:$\{L,R\}$}]{};
	\draw (b) -- (x) node[label=left:$x$, label={[name=label node]below left:$\{(L,)R\}$}]{} -- (c);
\draw[xshift=-2cm,snake=bumps] (x) -- (a);
\end{scope}

\end{tikzpicture}
\begin{tikzpicture}
	\vertexnodes 
\begin{scope}[scale=1]
	\coordinate(aprime) at (0.8,2.9);
	\coordinate(a) at (1.7,2.5);
	\coordinate(b) at (3,1);
	\coordinate(bprime) at (2,0);
	\coordinate(c) at (4.3,2.5);
	\coordinate(x) at (2,1.5);
	\coordinate(y) at (4.3,1.5);
	\draw  (aprime) node[label=above:$a'$, label=below:$\{L\}$]{} -- (a) node[label=above:$a$]{} -- (c) 
	node[label=above:$c$]{}; 
	\draw (a) --(b) node[label=above:$b$, label=below:$\{L\}$]{}; 
	\draw[dashed, color=gray] (b) -- (bprime) node[fill=gray,circle, inner sep=2.2pt, label=below:$b'$]{}; 
	\draw (y) node[style=draw,shape=rectangle,rotate=48,fill=none]{only $\{L\}$};
	\draw (b) -- (x) node[label=left:$x$, label={[name=label node]below left:$\{(L,)R\}$}]{} -- (c);
	\draw[xshift=-2cm,snake=bumps] (x) -- (a);
\end{scope}

\end{tikzpicture}
\end{center}
\caption{Cases 2.1, 2.2, and 2.3 of the proof of Theorem~\ref{theo:main}.
The spiral indicates that the edge may be present or not.}\label{fig:cases-two}
\end{figure}
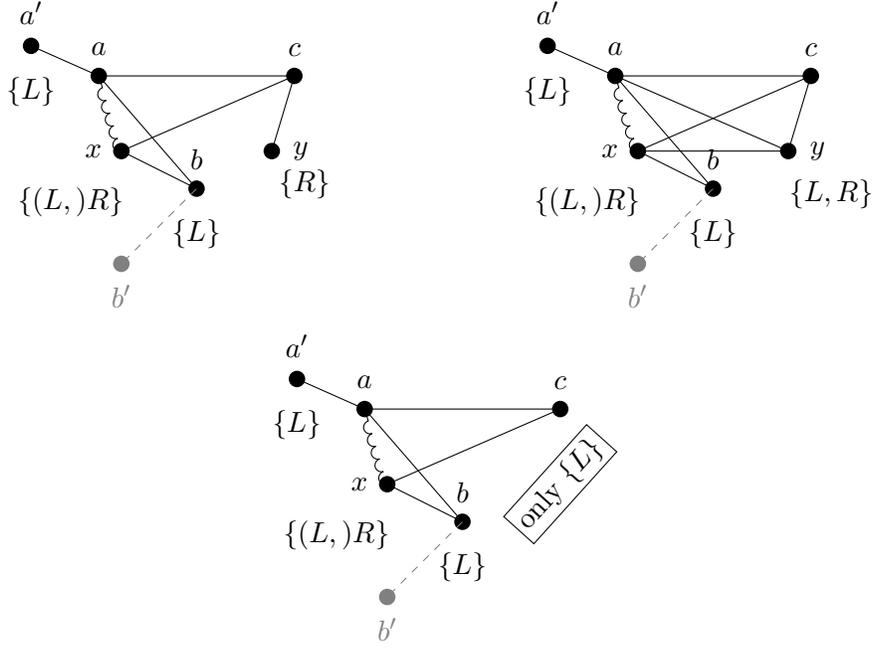

	\emph{Case 2.} $\LL(b)=\{L\}$ (see Figure~\ref{fig:cases-two}).
	Since $G_0$ is $2$-connected, there exists a vertex $x\in V(G\setminus b')\setminus\{a,a',b,c\}$
	that comes before $b$ in the ordering $\bar v$ and $R\in\LL(x)$.
	
	\emph{Case 2.1.} There is a vertex $y\in V(G\setminus b')\setminus\{a,a',b,c,x\}$
	that comes after $b$ in $\bar v$ and $\LL(y)=\{R\}$.
	If $L\notin\LL(x)$, then the set $\{a,b,b',c,x,y\}$ induces a graph isomorphic to $Q$ in
	$G\setminus a'$, a contradiction to $G$ being an
	obstruction. If $ L\in\LL(x)$, then $\{a,b,b',c,x,y\}$ induces a graph isomorphic to $N$ in
   $(G\setminus a')*x$, a contradiction to $G$ being an obstruction. 
	
	\emph{Case 2.2.} There is a vertex $y\in V(G\setminus b')\setminus\{a,a',b,c,x\}$
	that comes after $b$ in $\bar v$ and $\LL(y)=\{L,R\}$.
	If $L\notin\LL(x)$, then the set $\{a,b,b',c,x,y\}$ induces a graph isomorphic to 
	$N$ in $((G\setminus a')*c)*a$, a
	contradiction to $G$ being an obstruction.
   If $L\in\LL(x)$ then the set $\{a,b,b',c,x,y\}$ induces a graph isomorphic to $Q$ in 
	$(G\setminus a')*c$, a contradiction to $G$ being an obstruction.

	\emph{Case 2.3.} 
	Every vertex $y\in V(G\setminus b')\setminus\{a,a',b,c,x\}$
	that comes after $b$ in $\bar v$ has label $\LL(y)=\{L\}$.
	Then the interval $b,\ldots,c$ of $\bar v$ is $\LL$-constant,
	so by Lemma~\ref{lem:reorder-vertices} we can exchange the positions of $b$ and $c$ in $\bar v$,
	obtaining a new ordering $\bar w$ of $V(G\setminus b')$ witnessing that
	$(G\setminus b',(a,b),\bar w,\LL)$ is a thread block. But then $G$ is also a thread graph:
	To see this, place $b'$ between $w_{n-1}$ and $w_n$
	and let $\LL(b'):=\{R\}$. 
	
	This is a contradiction to $G$ being an obstruction.
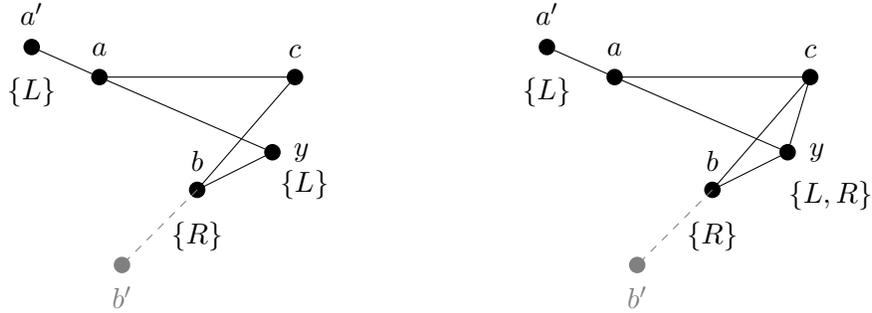
\begin{figure}[h]
\begin{center}	
\begin{tikzpicture}
	\vertexnodes 
\begin{scope}[scale=1]
	\coordinate(aprime) at (0.8,2.9);
	\coordinate(a) at (1.7,2.5);
	\coordinate(b) at (3,1);
	\coordinate(bprime) at (2,0);
	\coordinate(c) at (4.3,2.5);
	\coordinate(x) at (2,1.5);
	\coordinate(y) at (4,1.5);
	\draw  (aprime) node[label=above:$a'$, label=below:$\{L\}$]{} -- (a) node[label=above:$a$]{} -- (c) node[label=above:$c$]{}; 
	\draw (c) --(b) node[label=above:$b$, label=below:$\{R\}$]{}; 
	\draw[dashed, color=gray] (b) -- (bprime) node[fill=gray,circle, inner sep=2.2pt, label=below:$b'$]{}; 
	\draw (a) -- (y) node[label=right:$y$, label=below right:$\{L\}$]{} --(b);
\end{scope}
\end{tikzpicture}
\hspace{2cm}
\begin{tikzpicture}
	\vertexnodes 
\begin{scope}[scale=1]
	\coordinate(aprime) at (0.8,2.9);
	\coordinate(a) at (1.7,2.5);
	\coordinate(b) at (3,1);
	\coordinate(bprime) at (2,0);
	\coordinate(c) at (4.3,2.5);
	\coordinate(x) at (2,1.5);
	\coordinate(y) at (4,1.5);
	\draw  (aprime) node[label=above:$a'$, label=below:$\{L\}$]{} -- (a) 
	node[label=above:$a$]{} -- (c) node[label=above:$c$]{}; 
	\draw (c) --(b) node[label=above:$b$, label=below:$\{R\}$]{}; 
	\draw[dashed, color=gray] (b) -- (bprime) node[fill=gray,circle, inner sep=2.2pt, label=below:$b'$]{}; 
	\draw (y) -- (b);
	\draw (y) -- (a);
	\draw (c) -- (y) node[label=right:$y$, label={[name=label node]below right:$\{L,R\}$}]{};
\end{scope}
\end{tikzpicture}
\end{center}
\caption{Cases 3.1, and 3.2 of the proof of Theorem~\ref{theo:main}}\label{fig:cases-three}
\end{figure}

\medskip\noindent
	\emph{Case 3.} $\LL(b)=\{R\}$  (see Figure~\ref{fig:cases-three}).\\
	Since $G_0$ is $2$-connected, there exists a vertex 
	$y\in V(G\setminus b')\setminus\{a,a',b,c\}$
	that comes after $b$ in $\bar v$ and $L\in\LL(y)$. 

	\smallskip\noindent
	\emph{Case 3.1.} $\LL(y)=\{L\}$.\\
	Then $G$ is isomorphic to $Q$.
	
	\smallskip\noindent
	\emph{Case 3.2.} $\LL(y)=\{L,R\}$.\\
	Then $G*c$ is isomorphic to $N$, hence $G\sim N$, a contradiction to our assumptions.

	\smallskip
	Hence, if $G$ is an obstruction for $\CC$ such that $G\not\sim C_5$ and $G\not\sim N$,
	then only Cases 1.1.\ and 3.1.\ can occur and $G\sim Q$. 
\qed

\section{Conclusion}
The celebrated Robertson-Seymour Theorem shows the \emph{finiteness} of the obstruction sets 
for classes of graphs that are closed under taking minors. However, the cardinality
of such a set can be enormous.
While it is an open question, whether a similar theorem holds for classes of graphs that are closed
under taking vertex-minors, it is known that if the obstruction set has bounded rank-width, then
the obstruction set is finite. This implies that for every integer $k\geq 0$ the obstruction 
set for the class graphs of linear rank-width at most $k$ is finite. But until now, none of these
sets were known explicitely. 
In this paper, we have exhibited the finite set of minimal excluded vertex-minors for the class 
of linear rankwidth at most $1$. 
A natural next step would be to determine the obstruction set
for the graphs of linear rankwidth at most $2$. Nevertheless, we expect the number of
obstructions to be large. 
While there are two minimal excluded minors characterizing the class of graphs of path-width 
at most $1$, the class of graphs of path-width at most $2$ is characterized by 
$110$ minimal excluded minors~\cite{KinnersleyL94}.

\bibliographystyle{plain}
\newcommand{\hyphen}{-}
\bibliography{AFP-lrw1}
\end{document}